%%%%%%%%%%%%%%%%%%%%%%%%%%%%%%%%%%%%%%
%
% Ranking Templates for Linear Loops
% Journal Version (Logical Methods of Computer Science)
%
% Jan Leike and Matthias Heizmann 2015
%

\documentclass{LMCS}

\def\dOi{11(1:16)2015}
\lmcsheading%
{\dOi}
{1--27}
{}
{}
{Nov.~12, 2014}
{Mar.~31, 2015}
{}

\ACMCCS{[{\bf Theory of computation}]: Semantics and
  reasoning---Program reasoning---Program verification; [{\bf Software
      and its engineering}]: Software organization and
  properties---Software functional properties---Formal
  methods---Software verification}

\subjclass{
	D.2.4 [Software Engineering]: Software/Program Verification---Formal methods;
	F.3.1 [Logics and Meanings of Programs]: Specifying and Verifying and Reasoning about Programs---Mechanical verification%
}

\usepackage[english]{babel}
\usepackage[utf8]{inputenc}
\usepackage{hyperref}
\usepackage{wrapfig}

\usepackage{amsmath}
\usepackage{amssymb}

\usepackage{tikz}
\usetikzlibrary{automata,arrows}

% Pseudo code
\usepackage{listings}
\lstdefinelanguage{pseudo}{
	morekeywords={if, then, else, while, foreach, do, let, return},
	sensitive=false,
	morecomment=[l]{//},
	morecomment=[s]{/*}{*/},
	morestring=[b]",
}
\lstset{
	mathescape=true,
	language=pseudo
}

% Autoref capitalization
\AtBeginDocument{

}

%%%%%%%%%%%%%%%%%%%%%%%%%%%%%%%%%%%%%%%%%%%%%%%%%%
% Theorem environments with a sequential counter that works with autoref

\usepackage{aliascnt}

\newcommand{\mynewtheorem}[2]{
	\newaliascnt{#1}{dummy}
	\newtheorem{#1}[#1]{#2}
	\aliascntresetthe{#1}
	\expandafter\def\csname #1autorefname\endcsname{#2}
}
\newcounter{dummyBackup}

% Theorem styles
\theoremstyle{plain}
\mynewtheorem{theorem}{Theorem}
\mynewtheorem{lemma}{Lemma}
\mynewtheorem{corollary}{Corollary}
\theoremstyle{definition}
\mynewtheorem{definition}{Definition}
\mynewtheorem{remark}{Remark}
\mynewtheorem{example}{Example}

% Add boxes to examples
\usepackage{etoolbox}
\AtBeginEnvironment{example}{
	\renewcommand{\qed}{$\Diamond$}
}
\AtEndEnvironment{example}{
	\hfill\qed
}

% Fix the LMCS \qedhere
\renewcommand{\qedhere}{\eqno\qEd}
\newcommand{\noqed}{\renewcommand{\qed}{}}

%%%%%%%%%%%%%%%%%%%%%%%%%%%%%%%%%%%%%%%%%%%%%%%%%%

% Abbreviations
\newcommand{\stemt}{{\scriptstyle\mathrm{STEM}}} % formula for the stem transition
\newcommand{\loopt}{{\scriptstyle\mathrm{LOOP}}} % formula for the loop transition
\newcommand{\prog}{\mathbf{P}}                   % lasso program
\newcommand{\T}{{\scriptstyle\mathrm{T}}}        % ranking template
\newcommand{\abovebelow}[2]{(^{#1}_{#2})}        % vector from two components
\newcommand{\tr}[1]{#1^T}                        % transpose

%%%%%%%%%%%%%%%%%%%%%%%%%%%%%%%%%%%%
% Meta data

\begin{document}

\title[Ranking Templates for Linear Loops]{\quad\\\quad\\Ranking Templates for Linear Loops\rsuper*}

\author[J.~Leike]{Jan Leike\rsuper a}
\address{{\lsuper a}The Australian National University}
\email{jan.leike@anu.edu.au}

\author[M.~Heizmann]{Matthias Heizmann\rsuper b}
\address{{\lsuper b}University of Freiburg}
\email{heizmann@informatik.uni-freiburg.de}
\thanks{{\lsuper b}This work is supported by the
 German Research Council (DFG) as part of the Transregional Collaborative
 Research Center ``Automatic Verification and Analysis of Complex Systems''
 (SFB/TR14 AVACS)}

\keywords{
	Linear lasso program,
	linear loop program,
	termination,
	linear ranking template,
	well-founded relation,
	multiphase ranking function,
	nested ranking function,
	piecewise ranking function,
	lexicographic ranking function,
	parallel ranking function,
	Farkas' lemma,
	Motzkin's theorem}

% General Terms: Verification, Theory

\titlecomment{{\lsuper*}An earlier version of this paper appeared in TACAS 2014~\cite{LH14}.}

%%%%%%%%%%%%%%%%%%%%%%%%%%%%%%%%%%%%
% Abstract

\begin{abstract}
We present a new method for the constraint-based synthesis
of termination arguments for linear loop programs
based on \emph{linear ranking templates}.
Linear ranking templates are parameterized, well-founded relations such that
an assignment to the parameters gives rise to a ranking function.
Our approach generalizes existing methods and
enables us to use templates for
many different ranking functions with affine-linear components.
We discuss templates for
\emph{multiphase}, \emph{nested}, \emph{piecewise}, \emph{parallel}, and \emph{lexicographic} ranking functions.
These ranking templates can be combined to form more powerful templates.
Because these ranking templates require both strict and non-strict inequalities,
we use Motzkin's transposition theorem instead of Farkas' lemma
to transform the generated $\exists\forall$-constraint
into an $\exists$-constraint.
\end{abstract}

\maketitle

%%%%%%%%%%%%%%%%%%%%%%%%%%%%%%%%%%%%
% Introduction

\section{Introduction}
\label{sec:introduction}

The scope of this work is the constraint-based synthesis of termination arguments.
In our setting,
we consider \emph{linear loop programs},
which are specified by
a boolean combination of affine-linear inequalities over the program variables.
This allows for both,
deterministic and non-deterministic updates of the program variables.
An example of a linear loop program is given in
\autoref{fig:multiphase-introduction}.

\begin{figure}[t]
\begin{center}
\begin{minipage}{35mm}
\begin{lstlisting}
while ($q > 0$):
    $q$ := $q - y$;
    $y$ := $y + 1$;
\end{lstlisting}
\end{minipage}
\begin{minipage}{35mm}
\begin{align*}
q &> 0  \\
\land\; q' &= q - y \\
\land\; y' &= y + 1
\end{align*}
\end{minipage}
\end{center}
\caption{
A linear loop program given as program code (left) and
as a formula defining a binary relation (right).
}\label{fig:multiphase-introduction}
\end{figure}

Usually, linear lasso programs do not occur as stand-alone programs.
Instead, they are used as a finite representation of
an infinite path in a control flow graph.
For example, in (potentially spurious) counterexamples in termination analysis~\cite{CPR06,BCF13,HLNR10,KSTTW08,KSTW10,PR04TI,PodelskiR05,HHP14},
non-termination analysis~\cite{GHMRX08},
stability analysis~\cite{CFKP11,PW07},
or cost analysis~\cite{AAGP11,GZ10}.

We introduce the notion of \emph{linear ranking templates}
(\autoref{sec:templates}).
These are parameterized relations specified by linear inequalities such that
any assignment to the parameters yields a well-founded relation.
This notion is general enough to encompass all existing methods for linear loop programs
that use constraint-based synthesis of ranking functions of various kinds
(see \autoref{sec:related-work} for an assessment).
Moreover, ours is the first method for synthesis of lexicographic ranking functions
that does not require a mapping between loop disjuncts and lexicographic components.

In this paper we present the following linear ranking templates.
\begin{itemize}
\item The \emph{multiphase ranking template} specifies a ranking function
that proceeds through a fixed number of phases in the program execution.
Each phase is ranked by an affine-linear function; when this function
becomes non-positive, we move on to the next phase
	(\autoref{ssec:rt-multiphase}).
	We call such a ranking function a multiphase ranking function.
\item The \emph{nested ranking template} specifies a ranking function
	that is a special case of a multiphase ranking function
	(\autoref{ssec:rt-nested}).
	In contrast to the multiphase ranking template,
	the nested ranking template requires only linear constraint solving.
\item The \emph{piecewise ranking template} specifies a ranking function
that is a piecewise affine-linear function
with affine-linear predicates to discriminate between the pieces (\autoref{ssec:rt-pw}).
\item The \emph{lexicographic ranking template} specifies a lexicographic ranking function
that corresponds to a tuple of affine-linear functions together with a lexicographic ordering on the tuple (\autoref{ssec:rt-lex}).
\item The \emph{parallel ranking template} targets programs that
	have to complete a finite number of independent tasks
	with no predetermined order (\autoref{ssec:rt-parallel}).
\end{itemize}
Furthermore, our linear ranking templates can be used as a `construction kit'
for composing linear ranking templates
that enable more complex ranking functions (\autoref{sec:composition-of-templates}).
Thus, variations on the linear ranking templates presented here can be used
and completely different templates could be conceived.

Our method is described in \autoref{sec:synthesizing-ranking-functions}
and can be summarized as follows.
The input is a linear loop program as well as a linear ranking template.
From these we construct a constraint on the parameters of the template.
This constraint is a quantified nonlinear SMT formula.
With \hyperref[thm:Motzkin]{Motzkin's theorem}~\cite{Schrijver99} we transform the constraint
into a purely existentially quantified constraint.
This $\exists$-constraint is then passed to an SMT solver
which checks its satisfiability.
A positive result implies that the program terminates.
Furthermore, a satisfying assignment yields a ranking function,
which constitutes a termination argument for the given linear loop program.

Related approaches invoke Farkas' lemma for the transformation into $\exists$-constraints
\cite{ADFG10,BMS05linrank,BMS05polyrank,CSS03,HHLP13,PR04,Rybalchenko10,SSM04}.
Several of our ranking templates contain
both strict and non-strict inequalities, yet only non-strict inequalities can be transformed using Farkas' lemma.
We solve this problem by introducing the use of
\hyperref[thm:Motzkin]{Motzkin's Transposition Theorem}, a generalization of Farkas' lemma.
As a side effect, this also enables both strict and non-strict inequalities in the program syntax.
To our knowledge, all of the aforementioned methods can be 
extended to programs with strict inequalities
if \hyperref[thm:Motzkin]{Motzkin's theorem} is applied instead of Farkas' lemma.
%For the constraint-based synthesis of ranking functions for various
%classes of linear loop and lasso programs, the affine form of Farkas' lemma has been used extensively~\cite{ADFG10,BMS05linrank,BMS05polyrank,CSS03,HHLP13,PR04,Rybalchenko10,SSM04},
%yet we have not seen the use of Motzkin's theorem.
%
%Existing methods disallowed strict inequalities or sometimes resorted to the workaround of replacing a strict inequality $a > b$ by $a \geq b + 1$,
%which only works for integer domains.

Our method is complete in the following sense.
If there is a ranking function of the form specified by the given linear ranking template,
then our method will discover this ranking function.
In other words, the existence of a solution is never lost in the process of transforming the constraint.

In contrast to some related methods~\cite{HHLP13,PR04}
the constraint we generate is not linear,
but rather a nonlinear algebraic constraint.
Theoretically, this constraint can be decided in exponential time~\cite{GV88}.
Much progress on nonlinear SMT solvers has been made and
present-day implementations routinely solve nonlinear constraints of various sizes~\cite{JM12}.

A related setting to linear loop programs are \emph{linear lasso programs}
(see \autoref{fig:lasso}).
These consist of a linear loop program and a program stem,
both of which are specified by
boolean combinations of affine-linear inequalities over the program variables.
Our method can be extended to linear lasso programs
through the addition of affine-linear inductive invariants,
analogously to related approaches~\cite{BMS05linrank,CSS03,HHLP13,SSM04}
(\autoref{sec:linear-lasso-programs}).
%However, for linear lasso programs our completeness result does not hold.

In this work, we consider variables with values in the rational or real numbers.
Our method can be applied directly to integer programs,
but in this case our completeness result does not hold.
However, if we compute the integral hull of transition relations
analogously to \cite{CKRW13,HHLP13},
we obtain the same completeness result for integer-valued linear loop programs
as for rational- and real-valued linear loop programs.

This journal article is an extension of a conference paper~\cite{LH14}.
In the conference paper,
we introduced the notion of ranking templates and showed how to
solve them using Motzkin's theorem
(\autoref{sec:ranking-templates} and \autoref{sec:synthesizing-ranking-functions}).
We discussed the multiphase, the piecewise,
and the lexicographic ranking template.
The main additions in this article are
the nested and parallel ranking template,
the composition of ranking templates, and
the extension of our method to linear lasso programs,
as well as additional examples.

%%%%%%%%%%%%%%%%%%%%%%%%%%%%%%%%%%%%
% Preliminaries

\section{Preliminaries}
\label{sec:preliminaries}

In this paper we use $\mathbb{K}$ to denote a field that is either the rational numbers $\mathbb{Q}$ or
the real numbers $\mathbb{R}$.
%
%We use ordinal numbers according to the definition in \cite{Jech06}.
%The first infinite ordinal is denoted by $\omega$;
%the finite ordinals coincide with the natural numbers,
%therefore we use them interchangeably.

%%%%%%%%%%%%%%%%%%%%%%%%%%%%%%%%%%%%
\subsection{Set Theory}
\label{ssec:set-theory}

We use the the following notions from set theory~\cite{Jech06}.
A set $X$ is \emph{transitive} iff every element of $X$ is a subset of $X$.
A relation $R \subseteq X \times X$ is \emph{well-founded} iff
every non-empty subset of $X$ has an $R$-minimal element.

\begin{definition}[{Ordinal Number~\cite[Def.\ 2.10]{Jech06}}]
\label{def:ordinal-number}
A set $\alpha$ is an \emph{ordinal number} (an \emph{ordinal}) iff
it is transitive and
$\in$ (`element-of') is a well-founded total order on $\alpha$.
\end{definition}

The ordinal numbers are a method of counting `beyond infinity'.
The smallest ordinal is the empty set,
and for every ordinal $\alpha$
there is a unique successor ordinal $\alpha \cup \{ \alpha \}$.
The finite ordinals coincide with the natural numbers,
therefore we use them interchangeably.
The smallest infinite ordinal is denoted by $\omega$.
Ordinals can be added, multiplied and exponentiated,
but in general these operations are not commutative.
%and this coincides with the usual operations on the natural numbers.
%However, in general addition and multiplication of ordinals is not commutative.

We use the following theorem which allows us to define functions recursively.

\begin{theorem}[{Recursion Theorem~\cite[Thm.\ 6.11]{Jech06}}]
\label{thm:recursion}
Let $R$ be a well-founded relation on the set $X$ and
let $G$ be a function on sets.
Then there is a unique function $F$ with domain $X$ such that
for every $x \in X$,
\[
F(x) = G(x, F|_{\{ z \in X \mid (z, x) \in R \}}),
\]
where $F|_Y$ is the restriction of the function $F$ to the domain $Y$.
\end{theorem}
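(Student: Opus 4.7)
The plan is to follow the standard well-founded recursion argument: construct $F$ as the amalgamation of compatible partial solutions defined on $R$-downward-closed subsets of $X$, and treat uniqueness separately by a short induction on $R$.

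I would first dispose of uniqueness. Suppose $F_1$ and $F_2$ both satisfy the equation, and apply well-foundedness of $R$ to the set $S := \{x \in X \mid F_1(x) \neq F_2(x)\}$: if $S$ were non-empty, an $R$-minimal element $x \in S$ would satisfy $F_1|_{\mathrm{pred}(x)} = F_2|_{\mathrm{pred}(x)}$, where $\mathrm{pred}(x) := \{z \in X \mid (z,x) \in R\}$, and hence $F_1(x) = G(x, F_1|_{\mathrm{pred}(x)}) = G(x, F_2|_{\mathrm{pred}(x)}) = F_2(x)$, a contradiction.

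For existence, call a function $f$ an \emph{approximation} if $\mathrm{dom}(f) \subseteq X$ is closed under $R$-predecessors and $f(y) = G(y, f|_{\mathrm{pred}(y)})$ for every $y \in \mathrm{dom}(f)$. The minimality argument just used also shows that any two approximations agree on the intersection of their domains. I would then prove, again by choosing an $R$-minimal counterexample $x_0$, that every $x \in X$ lies in the domain of some approximation: for each predecessor $y$ of $x_0$ there is an approximation covering $y$; amalgamating these (using compatibility) yields an approximation $f^\star$ with $\mathrm{pred}(x_0) \subseteq \mathrm{dom}(f^\star)$, and extending by $f^\star(x_0) := G(x_0, f^\star|_{\mathrm{pred}(x_0)})$ produces an approximation defined at $x_0$, contradicting the choice of $x_0$. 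Finally, define $F(x)$ to be the common value of any approximation at $x$; by compatibility $F$ is a well-defined total function on $X$ satisfying the recursion.

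The step I expect to require the most care is the amalgamation in the existence argument. The class of all approximations is a priori a proper class (their codomains are unrestricted), so one cannot simply form the set of all approximations. The clean workaround is Replacement: compatibility makes the value assumed at each predecessor $y$ of $x_0$ independent of the approximation used, so the collection of pairs $\{(y, f(y)) \mid y \in \mathrm{pred}(x_0), \; f \text{ an approximation at } y\}$ is a set-sized image of $\mathrm{pred}(x_0)$ and can be taken as $f^\star|_{\mathrm{pred}(x_0)}$ directly, with no choice principle required.
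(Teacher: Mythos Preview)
The paper does not prove this theorem; it is simply quoted from Jech's \emph{Set Theory} as a standard tool and used later (in the proof of \autoref{lem:rf}) without further justification. Your argument is the standard well-founded recursion proof and is correct, including the care you take with Replacement in the amalgamation step; there is nothing to compare against in the paper itself.
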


%%%%%%%%%%%%%%%%%%%%%%%%%%%%%%%%%%%%
\subsection{Linear Loop Programs}
\label{ssec:loops}

In this work, we consider programs that consist of a single loop.
We use binary relations over the program's states to define its transition relation.

We denote by $x$
the vector of $n$ variables $(x_1, \ldots, x_n)^T \in \mathbb{K}^n$
corresponding to program states, and
by $x' = (x_1', \ldots, x_n')^T \in \mathbb{K}^n$
the variables of the next state.

\begin{definition}[Linear Loop Program]
\label{def:linear-loop}
A \emph{linear loop program} $\loopt(x, x')$ is a binary relation defined by a formula with the free variables $x$ and $x'$ of the form
\[
\bigvee_{i \in I} \big( A_i\abovebelow{x}{x'} \leq b_i \;\land\; C_i\abovebelow{x}{x'} < d_i \big)
\]
for some finite index set $I$, some matrices $A_i \in \mathbb{K}^{2n \times m_i}$,
$C_i \in \mathbb{K}^{2n \times k_i}$,
and some vectors $b_i \in \mathbb{K}^{m_i}$ and $d_i \in \mathbb{K}^{k_i}$.
The linear loop program $\loopt(x, x')$ is called \emph{conjunctive} iff
there is only one disjunct, i.e., $\#I = 1$.
\end{definition}
Geometrically the relation $\loopt$ corresponds to a union of convex polyhedra.

\begin{definition}[Termination]
\label{def:termination}
A linear loop program $\loopt(x, x')$ \emph{terminates} iff
the relation $\loopt(x, x')$ is well-founded.
\end{definition}

In general, the termination of linear loop programs is undecidable
because linear loop programs can be used to simulate counter machines.
An undecidability proof for the termination of linear lasso programs
is given in \cite[Thm.\ 3.18]{Leike13}.

\begin{example}
\label{ex:running1}
Consider the following program code.
\begin{center}
\begin{minipage}{48mm}
\begin{lstlisting}
while ($q > 0$):
    if ($y > 0$):
        $q$ := $q - y - 1$;
    else:
        $q$ := $q + y - 1$;
\end{lstlisting}
\end{minipage}
\end{center}
We represent this code using the following linear loop program:
\begin{align*}
&(q > 0 \;\land\; y > 0
  \;\land\; y' = y \;\land\; q' = q - y - 1) \\
\lor\; &(q > 0 \;\land\; y \leq 0
  \;\land\; y' = y \;\land\; q' = q + y - 1)
\end{align*}
This linear loop program is not conjunctive.
Furthermore, there is no infinite sequence of states $x_0, x_1, \ldots$ such that for all $i \geq 0$, the two successive states $(x_i, x_{i+1})$ are contained in the relation $\loopt$.
Hence the relation $\loopt(x, x')$ is well-founded and the linear loop program terminates.
We note that this linear loop program does not have a linear ranking function.
However, termination of this program can be proven using ranking functions
that we present in \autoref{ssec:rt-multiphase} and in \autoref{ssec:rt-nested}.
\end{example}

%%%%%%%%%%%%%%%%%%%%%%%%%%%%%%%%%%%%
% Ranking Templates

\section{Ranking Templates}
\label{sec:templates}

A \emph{ranking template} is a template for a well-founded relation.
More specifically, it is a parameterized formula defining a relation that is well-founded for all assignments to the parameters.
If we show that a given program's transition relation $\loopt$ is a subset of an instance of this well-founded relation, it must be well-founded itself and thus we have a proof for the program's termination.
Moreover, an assignment to the parameters of the template gives rise to a ranking function.
In this work, we consider ranking templates that can be encoded with linear arithmetic.

We call a formula whose free variables contain $x$ and $x'$ a \emph{relation template}.
Each free variable other than $x$ and $x'$ in a relation template is called \emph{parameter}.
Given an assignment $\nu$ to all parameter variables of a relation template $\T(x, x')$,
the evaluation $\nu(\T)$ is called an \emph{instantiation of the relation template $\T$}.
We note that each instantiation of a relation template $\T(x, x')$ defines a binary relation.

When specifying templates, we use parameter variables to define affine-linear functions.
For notational convenience, we write $f(x)$ instead of the term $\tr{s_f} x + t_f$,
where $s_f \in \mathbb{K}^n$ and $t_f \in \mathbb{K}$ are parameters.
We call $f$ an \emph{affine-linear function symbol}.

\begin{definition}[Linear Ranking Template]
\label{def:linear-rt}
Let $\T(x, x')$ be a relation template
with parameters $D$ and affine-linear function symbols $F$
that can be written as a boolean combination of atoms of the form
\begin{align*}
\sum_{f \in F} \big( \alpha_f \cdot f(x) + \beta_f \cdot f(x') \big)
+ \sum_{\delta \in D} \gamma_\delta \cdot \delta \;\rhd\; 0,
\end{align*}
where $\alpha_f, \beta_f, \gamma_\delta \in \mathbb{K}$ are constants and
$\rhd \in \{ \geq, > \}$.
We call $\T$ a \emph{linear ranking template} over $D$ and $F$ iff
every instantiation of $\T$ defines a well-founded relation.
\end{definition}

\begin{example}\label{ex:rt-pr}
We call the following template with parameters $D = \{ \delta \}$
and affine-linear function symbols $F = \{ f \}$ the
\emph{PR ranking template}~\cite{PR04}.
\begin{align}
\delta > 0 \;\land\; f(x) > 0 \;\land\; f(x') < f(x) - \delta
\label{eq:rt-pr}
\end{align}
In the remainder of this section,
we introduce a formalism that allows us to show that
every instantiation of the PR ranking template defines a well-founded relation.
Let us now check the additional syntactic requirements for \eqref{eq:rt-pr}
to be a linear ranking template:
\renewcommand{\qed}{} % dirty hack to get the diamond in the correct line
\begin{align*}
\delta > 0 \;&\equiv\;
  \big( 0 \cdot f(x) + 0 \cdot f(x') \big)
  + 1 \cdot \delta > 0 \\
f(x) > 0 \;&\equiv\;
  \big( 1 \cdot f(x) + 0 \cdot f(x') \big)
  + 0 \cdot \delta > 0 \\
f(x') < f(x) - \delta \;&\equiv\;
  \big( 1 \cdot f(x) + (- 1) \cdot f(x') \big)
  + (-1) \cdot \delta > 0
\tag*{$\Diamond$} % dirty hack to get the diamond in the correct line
\end{align*}
\end{example}

The next lemma states that we can prove termination of a given linear loop program
by checking that this program's transition relation
is included in an instantiation of a linear ranking template.
%The key is the linear ranking template's well-foundedness property.

\begin{lemma}[Termination]
\label{lem:termination}
Let $\loopt$ be a linear loop program and
let $\T$ be a linear ranking template
with parameters $D$ and affine-linear function symbols $F$.
If there is an assignment $\nu$ to $D$ and $F$ such that the formula
\begin{align}
\forall x, x'.\;
\big( \loopt(x, x') \rightarrow \nu(\T)(x, x') \big)
\label{eq:rt}
\end{align}
is valid, then the program $\loopt$ terminates.
\end{lemma}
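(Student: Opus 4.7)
The plan is to chain together three facts, each essentially by definition. First, by \autoref{def:linear-rt}, the hypothesis that $\T$ is a linear ranking template means that every instantiation of $\T$ defines a well-founded relation; in particular the specific instantiation $\nu(\T)$ defines a well-founded relation on $\mathbb{K}^n$. Second, the validity of the implication \eqref{eq:rt} is precisely the statement that the binary relation defined by $\loopt$ is a subset of the binary relation defined by $\nu(\T)$. Third, I would invoke the standard fact that any subset of a well-founded relation is itself well-founded: if $S \subseteq R$ and $R$ is well-founded on $X$, then for every non-empty $Y \subseteq X$ an $R$-minimal element $y \in Y$ is automatically $S$-minimal in $Y$, since any $S$-predecessor of $y$ in $Y$ would also be an $R$-predecessor. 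If desired, this sub-fact can be stated and proved as a one-line auxiliary observation before the main argument.

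Combining these, the relation $\loopt$ is well-founded, and hence by \autoref{def:termination} the linear loop program $\loopt$ terminates. There is no real obstacle here; the lemma is essentially an unfolding of \autoref{def:linear-rt} together with the elementary closure of well-foundedness under taking subrelations. The only conceptual point worth making explicit in the write-up is that all the nontrivial work has already been packaged into the definition of a linear ranking template, so that verifying the hypotheses of \autoref{def:linear-rt} once per template (as will be done in the subsequent subsections) yields a uniform soundness argument for all of them via this lemma.
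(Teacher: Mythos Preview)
Your proposal is correct and follows exactly the same approach as the paper's proof: observe that $\nu(\T)$ is well-founded by \autoref{def:linear-rt}, that \eqref{eq:rt} expresses $\loopt \subseteq \nu(\T)$, and conclude well-foundedness of $\loopt$. The paper's version is just terser, leaving the closure of well-foundedness under subrelations implicit.
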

\begin{proof}
By definition, $\nu(\T)$ is a well-founded relation and
\eqref{eq:rt} is valid iff
the relation $\loopt$ is a subset of $\nu(\T)$.
Thus $\loopt$ must be well-founded.
\end{proof}

In order to establish that a relation template
which is conforming to the syntactic requirements
is indeed a ranking template,
we have to show that each instantiation of the relation template is well-founded.
According to the following lemma,
we can do this by showing that each assignment to $D$ and $F$
gives rise to a ranking function.
A similar argument was given in \cite{Ben-Amram09};
we provide a significantly shortened proof
by use of the \hyperref[thm:recursion]{Recursion Theorem},
along the lines of \cite[Ex.\ 6.12]{Jech06}.

\begin{definition}[Ranking Function]
\label{def:rf}
Given a binary relation $R$ over a set $\Sigma$,
a function $\rho$ from $\Sigma$ to some ordinal $\alpha$ is
a \emph{ranking function} for $R$ iff
for all $x,x'\in \Sigma$ the following implication holds.
\[
(x, x') \in R \;\Longrightarrow\; \rho(x) > \rho(x')
\]
\end{definition}

\begin{lemma}[Existence of Ranking Functions]
\label{lem:rf}
A binary relation $R$ is well-founded if and only if there exists a ranking function for $R$.
\end{lemma}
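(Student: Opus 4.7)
The statement is an if-and-only-if, and I would treat the two implications separately, with the nontrivial direction being ``well-founded $\Rightarrow$ ranking function exists''. For the easy direction, suppose $\rho : \Sigma \to \alpha$ is a ranking function and, arguing by contradiction, assume some non-empty $S \subseteq \Sigma$ has no $R$-minimal element. Then for every $s \in S$ there exists $s' \in S$ with $(s, s') \in R$, and the ranking inequality gives $\rho(s) > \rho(s')$. The image $\{\rho(s) \mid s \in S\}$ is a non-empty subset of the ordinal $\alpha$, hence has a least element $\rho(m)$ for some $m \in S$. The element $s' \in S$ guaranteed by the failure of $R$-minimality at $m$ then satisfies $\rho(s') < \rho(m)$, contradicting the minimality of $\rho(m)$. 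Notably, no form of choice is used.

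For the converse, I would build $\rho$ by transfinite recursion. Letting $G(x, h) := \sup \{ h(x') + 1 \mid (x, x') \in R \}$ with the convention $\sup \emptyset := 0$, I would invoke \autoref{thm:recursion} on the converse relation $R^{-1}$, using that $\{z \mid (z, x) \in R^{-1}\}$ is exactly the set of $R$-successors of $x$. This delivers a function $\rho$ on $\Sigma$ satisfying $\rho(x) = \sup \{ \rho(x') + 1 \mid (x, x') \in R \}$, from which $(x, x') \in R$ immediately yields $\rho(x) \geq \rho(x') + 1 > \rho(x')$, verifying the ranking condition. Since the image $\rho(\Sigma)$ is a set of ordinals by Replacement, I would set $\alpha := (\sup \rho(\Sigma)) + 1$ so that $\rho$ takes values in the single ordinal $\alpha$ demanded by \autoref{def:rf}.

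The step I expect to require the most care is aligning the direction convention: \autoref{thm:recursion} phrases its recursion in terms of the $R$-predecessors $\{z \mid (z, x) \in R\}$, whereas the ranking function of \autoref{def:rf} must decrease along $R$-successors, so the theorem has to be applied to $R^{-1}$ rather than to $R$ itself. Once this bookkeeping is in place, the well-foundedness of $R^{-1}$ needed by the theorem is exactly the well-foundedness of $R$ assumed by the lemma, and the remainder of the argument is a routine unwinding of the recursive definition.
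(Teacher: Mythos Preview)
Your proposal is correct and follows essentially the same route as the paper: the nontrivial direction invokes \autoref{thm:recursion} to obtain $\rho(x) = \sup\{\rho(x') + 1 \mid (x,x') \in R\}$, and the easy direction derives a contradiction from an ordinal minimum. You are in fact more careful than the paper on two points it leaves implicit---the need to apply \autoref{thm:recursion} to $R^{-1}$ rather than to $R$ itself (since the theorem, quoted from Jech, recurses over $\{z \mid (z,x) \in R\}$), and the explicit packaging of the codomain as a single ordinal $\alpha$ via Replacement.
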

\begin{proof}
Let $\rho$ be a ranking function for $R$.
The image of a sequence decreasing with respect to $R$ under $\rho$ is a
strictly decreasing ordinal sequence.
Because the ordinals are well-ordered, this sequence cannot be infinite.

Conversely, the graph $G = (\Sigma, R)$ with vertices $\Sigma$ and edges $R$ is acyclic by assumption.
Hence the function $\rho$ that assigns to every element of $\Sigma$ an ordinal number such that
$\rho(x) = \sup\, \{ \rho(x') + 1 \mid (x, x') \in R \}$
is well-defined and exists due to the \hyperref[thm:recursion]{Recursion Theorem}.
\end{proof}

\begin{example}\label{ex:running2}
Consider the terminating linear loop program $\loopt$ from \autoref{ex:running1}.
A ranking function for $\loopt$ is $\rho: \mathbb{R}^2 \to \omega$, defined as follows.
\begin{align*}
\rho(q, y) =
\begin{cases}
\lceil q \rceil, &\text{if } q > 0,
\text{ and} \\
0 & \text{otherwise,}
\end{cases}
\end{align*}
where $\lceil \cdot \rceil$ denotes the ceiling function
that assigns to every real number $r$
the smallest natural number that is larger or equal to $r$.
Since we consider the natural numbers to be a subset of the ordinals,
the ranking function $\rho$ is well-defined.
\end{example}

We use assignments to a template's
{parameters and affine-linear function symbols} to construct a ranking function.
These functions are real-valued and we transform them into
ordinal-valued functions as follows.

\begin{definition}[Ordinal Ranking Equivalent]
\label{def:ordinal-ranking}
Given an affine-linear function $f$ and a real number $\delta > 0$ called
the \emph{step size}, we define the \emph{ordinal ranking equivalent} of $f$ as
\begin{align*}
\widehat{f}(x) =
\begin{cases}
\left\lceil \frac{f(x)}{\delta} \right\rceil, &\text{if } f(x) > 0,
\text{ and} \\
0 & \text{otherwise.}
\end{cases}
\end{align*}
\end{definition}

Our notation does not explicitly refer to $\delta$ to increase readability.
In our presentation the step size $\delta$ is always clear from
the context in which an ordinal ranking equivalent $\widehat{f}$ is used.

\begin{example}\label{ex:running3}
Consider the linear loop program $\loopt(x, x')$ from \autoref{ex:running1}.
For $\delta = 1/2$ and $f(q) = q + 1$,
the ordinal ranking equivalent of $f$ with step size $\delta$ is
\begin{align*}
\widehat{f}(q, y) =
\begin{cases}
\lceil 2(q + 1) \rceil, & \text{if } q + 1 > 0, \text{ and} \\
0 & \text{otherwise.}
\end{cases}
\end{align*}
\end{example}

The assignment from \autoref{ex:running3} to $\delta$ and $f$ makes the implication \eqref{eq:rt} valid.
In order to invoke \autoref{lem:termination} to show that the linear loop program given in \autoref{ex:running1} terminates,
we need to prove that the PR ranking template is a linear ranking template.
We use the following technical lemma.

\begin{lemma}[Well-Foundedness of Ordinal Ranking Equivalents]
\label{lem:ordinal-ranking}
Let $f$ be an affine-linear function of step size $\delta > 0$ and
let $x$ and $x'$ be two states.
If $f(x) > 0$ and $f(x) - f(x') > \delta$,
then $\widehat{f}(x) > 0$ and $\widehat{f}(x) > \widehat{f}(x')$.
\end{lemma}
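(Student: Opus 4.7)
The plan is to unfold the definition of the ordinal ranking equivalent and handle the two conclusions separately, splitting the second one by a case distinction on the sign of $f(x')$.

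First I would establish $\widehat{f}(x) > 0$. From $f(x) > 0$ and $\delta > 0$ the quantity $f(x)/\delta$ is strictly positive, so by the definition $\widehat{f}(x) = \lceil f(x)/\delta \rceil \geq 1$.

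For the second conclusion, $\widehat{f}(x) > \widehat{f}(x')$, I would distinguish two cases. In the easy case $f(x') \leq 0$, the definition yields $\widehat{f}(x') = 0$, and the conclusion follows from the previous paragraph. In the case $f(x') > 0$, both ordinal ranking equivalents are of the ceiling form, and I need to show
\[
\left\lceil \frac{f(x)}{\delta} \right\rceil \;>\; \left\lceil \frac{f(x')}{\delta} \right\rceil.
\]
From the hypothesis $f(x) - f(x') > \delta$, dividing by the positive $\delta$ gives $f(x)/\delta > f(x')/\delta + 1$. Writing $a := f(x)/\delta$ and $b := f(x')/\delta$, we have $a - 1 > b$, so taking ceilings yields $\lceil a \rceil - 1 = \lceil a - 1 \rceil \geq \lceil b \rceil$, which rearranges to the desired strict inequality.

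I do not expect a genuine obstacle here: the only subtle point is recognising that the strict inequality $a > b+1$ on the reals translates to a strict inequality of ceilings via the identity $\lceil a - 1\rceil = \lceil a\rceil - 1$ (valid because $1$ is an integer), combined with the fact that $\lceil a - 1\rceil \geq \lceil b \rceil$ whenever $a - 1 > b$. All other steps are direct applications of the definitions of $\widehat{f}$ and of the hypothesis $\delta > 0$.
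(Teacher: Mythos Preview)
Your proposal is correct and follows essentially the same approach as the paper: first derive $\widehat{f}(x) > 0$ from $f(x) > 0$, then split on whether $\widehat{f}(x') = 0$ (equivalently $f(x') \leq 0$) or not, and in the nontrivial case divide the hypothesis by $\delta$ to obtain $f(x)/\delta - f(x')/\delta > 1$ and conclude via the ceiling. Your write-up is simply more explicit about the ceiling manipulation than the paper's one-line ``hence'', but the argument is the same.
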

\begin{proof}
From $f(x) > 0$ follows that $\widehat{f}(x) > 0$.
Therefore $\widehat{f}(x) > \widehat{f}(x')$ in the case $\widehat{f}(x') = 0$.
For $\widehat{f}(x') > 0$, we use the fact that $f(x) - f(x') > \delta$
to conclude that $f(x) / \delta - f(x') / \delta > 1$ and
hence $\widehat{f}(x') > \widehat{f}(x)$.
\end{proof}

\begin{corollary}\label{cor:rt-pr}
The PR ranking template is a linear ranking template.
\end{corollary}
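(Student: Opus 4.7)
The plan is to verify both parts of Definition~\ref{def:linear-rt}: the syntactic shape and the well-foundedness of every instantiation. The syntactic shape was already checked in Example~\ref{ex:rt-pr}, so the only real work is to exhibit, for each assignment $\nu$ to the parameter $\delta$ and the affine-linear function symbol $f$, a ranking function witnessing well-foundedness of the instantiated relation.

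Given such an assignment $\nu$, I would fix the step size to be the positive real number $\delta := \nu(\delta) > 0$ (positivity is guaranteed by the first conjunct of the template) and take the affine-linear function to be $f := \nu(f)$. The natural candidate for a ranking function is then the ordinal ranking equivalent $\widehat{f}$ from Definition~\ref{def:ordinal-ranking}, which maps states into the ordinal $\omega$.

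To verify that $\widehat{f}$ is a ranking function for $\nu(\T)$, I would take an arbitrary pair $(x, x')$ in the instantiated relation. From the second and third conjuncts of the template, $f(x) > 0$ and $f(x') < f(x) - \delta$, i.e.\ $f(x) - f(x') > \delta$. These are exactly the hypotheses of Lemma~\ref{lem:ordinal-ranking}, which immediately yields $\widehat{f}(x) > \widehat{f}(x')$. Thus $\widehat{f}$ strictly decreases along the relation in the sense of Definition~\ref{def:rf}, and Lemma~\ref{lem:rf} concludes well-foundedness of $\nu(\T)$.

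I do not anticipate any real obstacle here: the heavy lifting has been packaged into Lemma~\ref{lem:ordinal-ranking} and Lemma~\ref{lem:rf}, and the corollary is essentially a matter of noticing that the template's three conjuncts are precisely what those lemmas demand. The only point to be careful about is that $\delta$ must be a fixed positive real (not a parametric expression) when invoking Definition~\ref{def:ordinal-ranking}, which is exactly what the first conjunct of the template ensures once we restrict attention to a concrete assignment $\nu$.
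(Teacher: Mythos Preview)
Your proposal is correct and follows essentially the same route as the paper: invoke Lemma~\ref{lem:ordinal-ranking} on the ordinal ranking equivalent $\widehat{f}$ to obtain a ranking function, then conclude well-foundedness via Lemma~\ref{lem:rf}. The paper's proof is just the compressed version of what you wrote.
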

\begin{proof}
Any assignment $\nu$ to $\delta$ and $f$
satisfies the requirements of \autoref{lem:ordinal-ranking}.
Consequently, $\widehat{f}$ is a ranking function for $\nu(\T)$,
and by \autoref{lem:rf} this implies that $\nu(\T)$ is well-founded.
\end{proof}

The goal of this paper is to use linear ranking templates
to prove the termination of linear lasso programs,
as exposed in \autoref{lem:termination}.
We defer the explanation how the $\exists\forall$-formula \eqref{eq:rt}
can be transformed so that it is easier to solve
to \autoref{sec:synthesizing-ranking-functions}.
The next two sections focus on additional examples for linear ranking templates.

%%%%%%%%%%%%%%%%%%%%%%%%%%%%%%%%%%%%
% Examples of Ranking Templates

\section{Examples of Ranking Templates}
\label{sec:ranking-templates}

%%%%%%%%%%%%%%%%%%%%%%%%%%%%%%%%%%%%
\subsection{The Multiphase Ranking Template}
\label{ssec:rt-multiphase}

The multiphase ranking template targets
programs that go through a finite number of phases in their execution.
Each phase is ranked with an affine-linear function and
the phase is considered to be completed once this function becomes non-positive.

\begin{example}\label{ex:2phase-1}
Consider the linear loop program from \autoref{fig:multiphase-introduction}
on page \pageref{fig:multiphase-introduction}.
Every execution can be partitioned into two phases:
first $y$ increases until it is positive and
then $q$ decreases until the loop condition $q > 0$ is violated.
Depending on the initial values of $y$ and $q$,
one or more phases might be skipped altogether.
\end{example}

\begin{definition}[Multiphase Ranking Template]
\label{def:rt-multiphase}
We define the \emph{$k$-phase ranking template} with
parameters $D = \{ \delta_1, \ldots, \delta_k \}$ and
affine-linear function symbols $F = \{ f_1, \ldots, f_k \}$ as follows.
\begin{align}
&\bigwedge_{i=1}^k \delta_i > 0 \label{eq:rt-multiphase1} \\
\land\; &\bigvee_{i=1}^k f_i(x) > 0 \label{eq:rt-multiphase2} \\
\land\; &f_1(x') < f_1(x) - \delta_1 \label{eq:rt-multiphase3} \\
\land\; &\bigwedge_{i=2}^k \Big( f_i(x') < f_i(x) - \delta_i
  \;\lor\; f_{i-1}(x) > 0 \Big) \label{eq:rt-multiphase4}
\end{align}
\end{definition}

We say that the multiphase ranking function given by
an assignment to $f_1, \ldots, f_k$ and $\delta_1, \ldots, \delta_k$
\emph{is in phase~$i$}
iff $f_i(x) > 0$ and $f_j(x) \leq 0$ for all $j < i$.
The condition \eqref{eq:rt-multiphase2} states that there is always some $i$ such that
the multiphase ranking function is in phase~$i$.
Conditions \eqref{eq:rt-multiphase3} and \eqref{eq:rt-multiphase4} state that
if we are in a phase $\geq i$,
then $f_i$ has to be decreasing by at least $\delta_i > 0$.
Thus we start in phase~$1$ and transition through the phases $2, \ldots, k$,
possibly skipping some or all of them.

Two special cases of the multiphase ranking template
have been discussed previously in the literature:
the $1$-phase ranking template,
because it coincides with the PR ranking template, and
the $2$-phase ranking template~\cite{BM13}.
% see def. 3.5 in their paper

Moreover,
multiphase ranking functions are related to
\emph{eventually negative expressions}
introduced by Bradley, Manna, and Sipma~\cite{BMS05polyrank}.
However, in contrast to our approach,
they require a template tree that specifies in detail
how each loop disjunct interacts with each phase.

\begin{example}\label{ex:2-phase}
Consider the program from \autoref{fig:multiphase-introduction}
on page \pageref{fig:multiphase-introduction}.
The assignment
\[
       f_1(q, y) = 1 - y,
\qquad f_2(q, y) = q + 1,
\qquad \delta_1 = \delta_2 = \tfrac{1}{2}
\]
yields a $2$-phase ranking function for this program.
This program is in phase~$1$ iff $y < 1$
and it is in phase~$2$ iff $y \geq 1$ and $q > -1$.
\end{example}

\begin{theorem}\label{thm:rt-multiphase}
The $k$-phase ranking template is a linear ranking template.
\end{theorem}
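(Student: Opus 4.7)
The plan is to apply \autoref{lem:rf} by exhibiting an ordinal-valued ranking function for every instantiation $\nu(\T)$ of the $k$-phase template. Fix such an assignment $\nu$, and on any state $x$ satisfying condition \eqref{eq:rt-multiphase2} define the current phase
\[ i(x) = \min\{\, i \in \{1, \ldots, k\} \mid f_i(x) > 0 \,\}. \]
Using $\delta_i$ as the step size of the ordinal ranking equivalent $\widehat{f_i}$, I would set
\[ \rho(x) = \omega \cdot (k - i(x) + 1) + \widehat{f_{i(x)}}(x) \]
on those states and $\rho(x) = 0$ when every $f_j(x) \leq 0$. The codomain of $\rho$ is the ordinal $\omega \cdot (k+1)$.

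The key structural observation is that the phase is non-decreasing along any transition in $\nu(\T)$. If $(x, x') \in \nu(\T)$ and $i := i(x)$, then for every $j \leq i$ either $j = 1$ or $f_{j-1}(x) \leq 0$ by the minimality of $i$. Together, condition \eqref{eq:rt-multiphase3} (for $j = 1$) and the $j$-th conjunct of \eqref{eq:rt-multiphase4} (for $2 \leq j \leq i$) force $f_j(x') < f_j(x) - \delta_j$, so in particular $f_j(x') < -\delta_j < 0$ for every $j < i$. Hence either $i(x')$ is undefined, or $i(x') \geq i$.

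I would then conclude by a short case distinction on $i(x')$. If $i(x')$ is undefined, then $\rho(x') = 0 < \omega \leq \rho(x)$. If $i(x') > i$, then $k - i(x') + 1 < k - i + 1$, and the ordinal identity $\omega \cdot m + n < \omega \cdot (m+1)$ for $m, n < \omega$ yields $\rho(x') < \omega \cdot (k - i + 1) \leq \rho(x)$. If $i(x') = i$, then $f_i$ with step size $\delta_i$ satisfies the hypotheses of \autoref{lem:ordinal-ranking} (we have $f_i(x) > 0$ and $f_i(x) - f_i(x') > \delta_i$), so $\widehat{f_i}(x') < \widehat{f_i}(x)$ and therefore $\rho(x') < \rho(x)$. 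An appeal to \autoref{lem:rf} then finishes the proof.

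The main obstacle I anticipate is the combinatorial bookkeeping around \eqref{eq:rt-multiphase4}: one has to chase the disjunctions carefully to see that in phase $i$ all of $f_1, \ldots, f_i$ must drop simultaneously by their respective $\delta_j$, so that the phase truly cannot retreat. Once that is in place, the between-phase decrease reduces to elementary Cantor normal form arithmetic and the within-phase decrease is immediate from \autoref{lem:ordinal-ranking}.
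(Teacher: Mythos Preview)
Your proof is correct and follows essentially the same approach as the paper: both define the same phase-indexed ordinal ranking function (modulo a harmless shift, your $\omega\cdot(k-i(x)+1)$ versus the paper's $\omega\cdot(k-i)$) and argue via the same case analysis on whether the phase advances or stays put, invoking \autoref{lem:ordinal-ranking} in the latter case. The only omission is the one-line remark that the template meets the syntactic requirements of \autoref{def:linear-rt}, which the paper states explicitly.
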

\begin{proof}
The $k$-phase ranking template conforms to
the linear ranking template's syntactic requirements.
Let $\nu$ be an assignment to the parameters $D$
and the affine-linear function symbols $F$
of the $k$-phase ranking template $\T_{k\text{-phase}}$.
Consider the following ranking function with codomain $\omega \cdot k$.
\begin{equation}\label{eq:rt-multiphase-rf}
\rho(x) :=
\begin{cases}
\omega \cdot (k - i) + \widehat{f_i}(x) & \text{if }
f_j(x) \leq 0 \text{ for all } j < i \text{ and } f_i(x) > 0, \\
0 & \text{otherwise.}
\end{cases}
\end{equation}
Let $(x, x') \in \nu(\T_{k\text{-phase}})$.
By \autoref{lem:rf}, we need to show that $\rho(x') < \rho(x)$.
From \eqref{eq:rt-multiphase2} follows that $\rho(x) > 0$.
Moreover, there is an $i$ such that $f_i(x) > 0$ and $f_j(x) \leq 0$ for all $j < i$.
By \eqref{eq:rt-multiphase3} and \eqref{eq:rt-multiphase4}, we obtain
$f_j(x') \leq 0$ for all $j < i$, because
$f_j(x') < f_j(x) - \delta_j \leq 0 - \delta_j < 0$, since
$f_\ell(x) \leq 0$ for all $\ell < j$.

If $f_i(x') \leq 0$, then $\rho(x') \leq \omega \cdot (k - i)
< \omega \cdot (k - i) + \widehat{f_i}(x) = \rho(x)$.
Otherwise, $f_i(x') > 0$ and from \eqref{eq:rt-multiphase4} follows $f_i(x') < f_i(x) - \delta_i$.
By \autoref{lem:ordinal-ranking},
$\widehat{f_i}(x) > \widehat{f_i}(x')$ for the ordinal ranking equivalent of $f_i$ with step size $\delta_i$.
Hence
\[
\rho(x') = \omega \cdot (k - i) + \widehat{f_i}(x')
  < \omega \cdot (k - i) + \widehat{f_i}(x)
  = \rho(x).
\]
Therefore \autoref{lem:rf} implies that
$\nu(\T_{k\text{-phase}})$ is well-founded.
\end{proof}

Does each terminating linear loop program
have a multiphase ranking function
if we restrict ourselves to conjunctive linear loop programs?
The following theorem gives a negative answer to this question.

\begin{theorem}\label{thm:conjunctive-loop-no-multiphase}
The following terminating conjunctive linear loop program
does not have a multiphase ranking function.
\begin{equation}\label{eq:conjunctive-loop-no-multiphase}
a > b \;\land\; b > 1
\;\land\; a' = 2a \;\land\; b' = 3b
\end{equation}
\end{theorem}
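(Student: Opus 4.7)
The plan is to argue by contradiction. Suppose a $k$-phase ranking function exists; write $f_i(a,b) = \alpha_i a + \beta_i b + \gamma_i$ with step sizes $\delta_i > 0$. The template's conditions must hold at every $(a,b)$ with $a > b > 1$ together with its successor $(a', b') = (2a, 3b)$. I will parametrize such states by the ratio $s := a/b > 1$, writing them as $(sb, b)$ with $b > 1$, and study what happens as $b \to \infty$.

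Along this ray, set $\psi_i(s) := \alpha_i s + \beta_i$ and $\phi_i(s) := \alpha_i s + 2\beta_i$, so that $f_i(sb, b) = \psi_i(s)\, b + \gamma_i$ and $f_i(2sb, 3b) - f_i(sb, b) = \phi_i(s)\, b$. For each $s > 1$ and every $b > 1$ the template then demands: (i)~some $\psi_i(s)\, b + \gamma_i > 0$; (ii)~$\phi_1(s)\, b < -\delta_1$; and (iii)~for $i \geq 2$, $\phi_i(s)\, b < -\delta_i$ or $\psi_{i-1}(s)\, b + \gamma_{i-1} > 0$. All of these are linear in $b$, so letting $b \to \infty$ turns them into conditions on $\psi_i(s), \phi_i(s), \gamma_i$ alone.

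I would then choose $s$ larger than every $|\beta_i/\alpha_i|$ with $\alpha_i \neq 0$. For such $s$ the sign of $f_i(sb, b)$ at large $b$ is governed by $\psi_i(s)$, except in the degenerate case $\alpha_i = \beta_i = 0$, where it is governed by $\gamma_i$. Hence the \emph{asymptotic phase} $I(s)$, defined as the least $i$ with $f_i(sb, b) > 0$ for all sufficiently large $b$, must lie in $\mathcal{A} := \{i : \alpha_i > 0\} \cup \{i : \alpha_i = 0,\ \beta_i > 0\} \cup \{i : \alpha_i = \beta_i = 0,\ \gamma_i > 0\}$, and in fact equals $\min \mathcal{A}$, independently of $s$.

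The contradiction now falls out of the decrease clause. By minimality of $I$, the alternative $\psi_{I-1}(s)\, b + \gamma_{I-1} > 0$ in \eqref{eq:rt-multiphase4} fails for large $b$, so $\phi_I(s)\, b < -\delta_I$ is forced, requiring $\phi_I(s) < 0$. But for $I$ in the first class of $\mathcal{A}$, $\phi_I(s) = \alpha_I s + 2\beta_I \to +\infty$ as $s \to \infty$; for the second class, $\phi_I = 2\beta_I > 0$; for the third class, $\phi_I = 0$. Each contradicts $\phi_I(s) < 0$. And if $\mathcal{A} = \emptyset$, then \eqref{eq:rt-multiphase2} itself fails at large $s$ and $b$, since no $f_i$ can be positive there. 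The main delicacy I anticipate is the stabilization argument showing that for $s$ large enough $I(s) = \min \mathcal{A}$, and making the case split over $\mathcal{A}$ exhaustive, particularly in the degenerate regime $\alpha_i = \beta_i = 0$.
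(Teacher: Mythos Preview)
Your argument is correct and follows the same strategy as the paper's: pick $(a,b)$ with $a \gg b \gg 1$, identify the active phase $j$, and obtain a sign contradiction from the decrease requirement $f_j(a',b')-f_j(a,b)=\alpha_j a + 2\beta_j b < 0$. The only difference is packaging: the paper selects explicit finite values for $a$ and $b$ (as maxima over finitely many thresholds) and splits into five sign cases on $(\alpha_j,\beta_j)$, whereas you phrase everything asymptotically via the ratio $s=a/b$ and organize the cases through the set $\mathcal{A}$---your three classes correspond to the paper's cases (i)--(iii), and the paper's cases (iv)--(v) are absorbed into your stabilization claim $I=\min\mathcal{A}$.
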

\begin{proof}
The variables $a$ and $b$ are positive and grow exponentially,
but $b$ grows faster than $a$.
For any input, $b$ eventually becomes larger than $a$ and then
the loop program terminates.

Assume the loop program \eqref{eq:conjunctive-loop-no-multiphase}
has a multiphase ranking function.
Then there are $\alpha_1, \beta_1, \gamma_1, \ldots, \alpha_k, \beta_k, \gamma_k \in \mathbb{R}$ such that
$f_i(a, b) = \alpha_i a + \beta_i b + \gamma_i$ for all $1 \leq i \leq k$.
Choose
\[
b := \max \{ 2 \} \cup \left\{ \tfrac{\gamma_i}{\beta_i}, \tfrac{-\gamma_i}{\beta_i}
       \mid \beta_i \neq 0 \right\},
\quad\quad
a := \max \{ b + 1 \} \cup \left\{ -2\beta_i b, \tfrac{-\gamma_i}{\alpha_i}
       \mid \alpha_i \neq 0 \right\}.
\]
As maxima over a finite nonempty set, $a, b \in \mathbb{R}$ exist uniquely
and we have $a > b > 1$ by construction.
By setting $a' = 2a$ and $b' = 3b$, we get $(a, b, 2a, 3b) \in \loopt$.
Let $j$ be the smallest index such that $f_j(a, b) > 0$,
which exists due to \eqref{eq:rt-multiphase2}.
According to \eqref{eq:rt-multiphase1} we obtain $\delta_j > 0$ and
since $j$ is minimal, we get $f_j(a', b') < f_j(a, b)$
from \eqref{eq:rt-multiphase4} (\eqref{eq:rt-multiphase3} in case $j = 1$).
Hence
\begin{equation}\label{eq:f_j-decreasing}
  0
> f_j(a', b') - f_j(a, b)
= 2\alpha_j a + 3\beta_j b - \alpha_j a - \beta_j b
= \alpha_j a + 2\beta_j b.
\end{equation}
We do an exhaustive case analysis over $\alpha_j$ and $\beta_j$,
and show that all cases yield contradictions.
Thus our assumption that there is a multiphase ranking function must be false.
\begin{enumerate}[label=(\roman*)]
\item $\alpha_j > 0$:
From \eqref{eq:f_j-decreasing} and $a \geq -2\beta_j b$ we get
\[
     0
>    \alpha_j a + 2\beta_j b
\geq -2\beta_i b + 2\beta_j b
=    0.
\]
\item $\beta_j > 0$:
From \eqref{eq:f_j-decreasing} and $b \geq \gamma_j / \beta_j$ we get
\[
     0
>    \alpha_j a + 2\beta_j b
\geq \alpha_j a + \beta_j b + \beta_j \tfrac{\gamma_j}{\beta_j}
=    f_j(a, b)
>    0.
\]
\item $\alpha_j = \beta_j = 0$:
From \eqref{eq:f_j-decreasing} we get
$
  0
> \alpha_j a + 2\beta_j b
= 0
$.
\item $\alpha_j < 0$ and $\beta_j \leq 0$:
From $a \geq -\gamma_j / \alpha_j$ we get
\[
     0
<    f_j(a, b)
=    \alpha_j a + \beta_j b + \gamma_j
\leq \alpha_j a + \gamma_j
\leq \alpha_j \tfrac{-\gamma_j}{\alpha_j} + \gamma_j
=    0.
\]
\item $\beta_j < 0$ and $\alpha_j \leq 0$:
From $b \geq -\gamma_j / \beta_j$ we get
\[
     0
<    f_j(a, b)
=    \alpha_j a + \beta_j b + \gamma_j
\leq \beta_j b + \gamma_j
\leq \beta_j \tfrac{-\gamma_j}{\beta_j} + \gamma_j
=    0.
\qedhere
\]
\end{enumerate}
\noqed
\end{proof}

\begin{example}\label{ex:multiphase-complexity}
Recall that we allowed our linear loop programs
to have nondeterministic variable assignments.
Because of this,
the existence of a multiphase ranking function
does not imply an upper bound on the execution time of the program.
Consider the following linear loop program.
\begin{align*}
  &(q > 0 \;\land\; y > 0 \;\land\; y' = 0) \\
\lor\; &(q > 0 \;\land\; y \leq 0 \;\land\; y' = y - 1 \;\land\; q' = q - 1)
\end{align*}
For a given input with $y > 0$,
we cannot give an upper bound on the execution time:
after the first loop execution, $y$ is set to $0$ and
$q$ is set to \emph{some arbitrary value}, as no restriction to $q'$ applies in the first disjunct.
In particular, this value does not depend on the input.
The remainder of the loop execution then takes $\lceil q \rceil$ iterations to terminate.

However, we can prove the program's termination with
the $2$-phase ranking function constructed from
$f_1(q, y) = y$ and $f_2(q, y) = q$.
\end{example}

%%%%%%%%%%%%%%%%%%%%%%%%%%%%%%%%%%%%
\subsection{The Nested Ranking Template}
\label{ssec:rt-nested}

Like the multiphase ranking template,
the nested ranking template targets programs
that go through a fixed number of phases in their execution.
Again, each phase has an affine-linear ranking function,
but this affine-linear function cannot increase by more than
the value of the previous phase's affine-linear function.
Thus once the previous phase is finished,
its value starts decreasing.

\begin{definition}[Nested Ranking Template]
\label{def:rt-nested}
We define the \emph{$k$-nested ranking template} with
parameters $D = \{ \delta \}$ and
affine-linear function symbols $F = \{ f_1, \ldots, f_k \}$ as follows.
\begin{align}
&\delta > 0 \label{eq:rt-nested1} \\
\land\; &f_k(x) > 0 \label{eq:rt-nested2} \\
\land\; &f_1(x') < f_1(x) - \delta \label{eq:rt-nested3} \\
\land\; &\bigwedge_{i=2}^k f_i(x') < f_i(x) + f_{i-1}(x) \label{eq:rt-nested4}
\end{align}
\end{definition}

\begin{example}\label{ex:2-nested}
Consider the program from \autoref{fig:multiphase-introduction}.
In \autoref{ex:2-phase} we gave an assignment for the $2$-phase ranking template.
We can use almost the same assignment
\[
       f_1(q, y) = 1 - y,
\qquad f_2(q, y) = q + 1,
\qquad \delta = \tfrac{1}{2}
\]
to get a $2$-nested ranking function for this program.
\end{example}

The following lemma states that nested ranking functions
are a special case of multiphase ranking functions.

\begin{lemma}[Nested Ranking Template $\subseteq$ Multiphase Ranking Template]
\label{lem:nested-and-multiphase}
For every assignment $\nu$
to the $k$-phase ranking template $\T_{k\text{-phase}}$
there is an assignment $\nu'$
to the $k$-nested ranking template $\T_{k\text{-nested}}$
such that $\nu'(\T_{k\text{-nested}}) \subseteq \nu(\T_{k\text{-phase}})$
\end{lemma}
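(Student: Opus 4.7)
The plan is to build $\nu'$ by starting from the multiphase $\nu$'s own data and shifting each affine-linear function downward by a constant chosen to swallow all of $\nu$'s step sizes. Concretely, given $\nu$ with functions $f_1,\dots,f_k$ and step sizes $\delta_1,\dots,\delta_k$, I would set $c := \max\{\delta_1,\dots,\delta_k\}$, which is strictly positive, and then define $\nu'$ by $g_i(x) := f_i(x) - c$ for each $i$ together with nested step size $\delta := c$. The purpose of the shift is that the nested recurrence $g_i(x') < g_i(x) + g_{i-1}(x)$, once unfolded in terms of $f_i$, becomes $f_i(x') < f_i(x) + f_{i-1}(x) - c$, carrying a $-c$ penalty that supplies exactly the slack needed to match the multiphase conditions.

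Next I would verify condition-by-condition that the nested template under $\nu'$ implies the multiphase template under $\nu$. The top condition $g_k(x) > 0$ rewrites to $f_k(x) > c > 0$, so the disjunction $\bigvee_i f_i(x) > 0$ holds. The first-phase condition $g_1(x') < g_1(x) - \delta$ simplifies to $f_1(x') < f_1(x) - \delta$, and because $\delta \geq \delta_1$ this entails $f_1(x') < f_1(x) - \delta_1$. For the recurrence at each $i \geq 2$, I would split on the sign of $f_{i-1}(x)$: if $f_{i-1}(x) > 0$, the second disjunct of the multiphase condition holds; otherwise $f_{i-1}(x) \leq 0$, and the unfolded nested recurrence gives $f_i(x') < f_i(x) + f_{i-1}(x) - c \leq f_i(x) - c \leq f_i(x) - \delta_i$, yielding the first disjunct.

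The hard part is choosing the shift $c$ correctly. A naive assignment $g_i := f_i$ fails: when $f_{i-1}(x) = 0$, the nested recurrence reduces to $f_i(x') < f_i(x)$, which is too weak to recover the multiphase requirement of a strict decrease by $\delta_i$. Taking $c \geq \max_i \delta_i$ is precisely what converts the nonstrict hypothesis $f_{i-1}(x) \leq 0$ into a bound $f_{i-1}(x) - c \leq -\delta_i$ that is strong enough to force the required decrease. Once the correct constant is fixed, the remainder of the argument is mechanical syntactic unfolding of the definitions of $g_i$ and $\delta$.
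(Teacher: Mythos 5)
Your proof is correct and follows essentially the same strategy as the paper's: shift the affine-linear functions downward by a constant so that the nested recurrence $f_i(x') < f_i(x) + f_{i-1}(x) - c$ absorbs the multiphase step sizes, then verify the implication condition by condition. The only difference is bookkeeping — the paper shifts $f_i$ by the individual $\delta_{i+1}$ (leaving $f_k$ unshifted) and sets $\delta := \delta_1$, whereas you use the uniform offset $c = \max_i \delta_i$ — and both choices work for the same reason.
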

\begin{proof}
For a given $\nu$, we choose
\[
\nu'(\delta) := \nu(\delta_1),
\quad
\nu'(f_i) := \nu(f_i) - \nu(\delta_{i+1})
\quad
\nu'(f_k) := \nu(f_k).
\]
We show $\nu'(\T_{k\text{-nested}}) \subseteq \nu(\T_{k\text{-phase}})$
by showing that each of
\eqref{eq:rt-nested1}, \eqref{eq:rt-nested2}, \eqref{eq:rt-nested3},
and \eqref{eq:rt-nested4}
with assignment $\nu'$ implies
\eqref{eq:rt-multiphase1}, \eqref{eq:rt-multiphase2}, \eqref{eq:rt-multiphase3},
and \eqref{eq:rt-multiphase4} with assignment $\nu$, respectively.
This is immediate for the first three lines.
For \eqref{eq:rt-nested4} $\rightarrow$ \eqref{eq:rt-multiphase4}
let $(x, x')$ and $i > 1$ be given and assume $\nu(f_{i-1})(x) \leq 0$.
We get
\begin{align*}
      \nu(f_i)(x')
&=    \nu'(f_i)(x') + \nu(\delta_{i+1}) \\
&\leq \nu'(f_i)(x') + \nu(\delta_{i+1}) - \nu(f_{i-1})(x) \\
&<    \nu'(f_i)(x) + \nu'(f_{i-1})(x) + \nu(\delta_{i+1}) - \nu(f_{i-1})(x) \\
&=    \nu(f_i)(x) + \nu(f_{i-1})(x) - \nu(\delta_i) - \nu(f_{i-1})(x) \\
&=    \nu(f_i)(x) - \nu(\delta_i),
\end{align*}
with $\nu(\delta_{k+1}) := 0$ for notational convenience.
\end{proof}

\begin{theorem}\label{thm:rt-nested}
The $k$-nested ranking template is a linear ranking template.
\end{theorem}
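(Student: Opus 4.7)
The plan is to derive well-foundedness from the already-established multiphase case via \autoref{lem:nested-and-multiphase}, thereby avoiding a direct ranking-function construction. First I would verify the syntactic prerequisite of \autoref{def:linear-rt}: each conjunct of \eqref{eq:rt-nested1}--\eqref{eq:rt-nested4} is already presented as an affine-linear combination of $f_i(x)$, $f_i(x')$, and $\delta$, with coefficients in $\{-1,0,1\}$, so the nested template qualifies as a relation template. It only remains to show that every instantiation defines a well-founded relation.

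Fix an arbitrary assignment $\nu'$ to the nested template's parameter $\delta$ and function symbols $f_1,\ldots,f_k$. The key move is to read \autoref{lem:nested-and-multiphase} in the reverse direction: construct a multiphase assignment $\nu$ such that the three-line procedure in the lemma's proof, applied to $\nu$, recovers exactly this $\nu'$. A natural choice is
\[
  \nu(\delta_i) := \nu'(\delta) \text{ for } 1 \leq i \leq k,
  \qquad
  \nu(f_i) := \nu'(f_i) + \nu'(\delta) \text{ for } 1 \leq i < k,
  \qquad
  \nu(f_k) := \nu'(f_k).
\]
A direct substitution then confirms that the lemma's formulae $\nu'(\delta) = \nu(\delta_1)$, $\nu'(f_i) = \nu(f_i) - \nu(\delta_{i+1})$, and $\nu'(f_k) = \nu(f_k)$ return our original $\nu'$. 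Moreover, since $\nu'(\delta) > 0$ by \eqref{eq:rt-nested1}, all $\nu(\delta_i)$ are positive, so $\nu$ is a legitimate assignment to the multiphase template.

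Applying \autoref{lem:nested-and-multiphase} to $\nu$ and this induced $\nu'$ then yields $\nu'(\T_{k\text{-nested}}) \subseteq \nu(\T_{k\text{-phase}})$. By \autoref{thm:rt-multiphase}, $\nu(\T_{k\text{-phase}})$ is well-founded, and any subset of a well-founded relation is well-founded, so $\nu'(\T_{k\text{-nested}})$ is well-founded. As $\nu'$ was arbitrary, this establishes that the $k$-nested ranking template is a linear ranking template.

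The main obstacle is recognising that although \autoref{lem:nested-and-multiphase} is phrased as going from multiphase assignments to nested ones, its three-line construction is surjective onto nested assignments: for any nested $(\delta, f_1,\ldots, f_k)$, the parameters $\delta_2,\ldots,\delta_k$ of a preimage multiphase assignment may be picked freely (any positive value works), the only real constraint being $\delta_1 = \delta$. Once this reversed reading is in hand, the reduction is immediate; the only alternative would be to redo the ordinal-ranking-function argument of \autoref{thm:rt-multiphase} directly, which is strictly more laborious.
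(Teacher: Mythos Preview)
Your proposal is correct and follows the same route as the paper: the paper's proof is a one-liner, ``Follows from \autoref{thm:rt-multiphase} and \autoref{lem:nested-and-multiphase}.'' You have simply unpacked what that line requires. In particular, you are right to flag that \autoref{lem:nested-and-multiphase}, as literally stated, quantifies in the wrong direction for an immediate application (it starts from a multiphase assignment $\nu$ and produces a nested $\nu'$, whereas the theorem needs to start from an arbitrary nested $\nu'$). Your explicit inversion---choosing $\nu(\delta_i) := \nu'(\delta)$ and $\nu(f_i) := \nu'(f_i) + \nu'(\delta)$ for $i<k$, $\nu(f_k):=\nu'(f_k)$, then checking it is a preimage under the lemma's construction---is exactly the missing step that the paper's terse proof leaves implicit. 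So: same approach, but you have filled in a genuine gap in the exposition.
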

\begin{proof}
Follows from \autoref{thm:rt-multiphase}
and \autoref{lem:nested-and-multiphase}.
\end{proof}

If a nested ranking function
is just a special case of a multiphase ranking function,
why are we considering it separately?
The advantage of the nested template is that
it does not contain any disjunctions.
Thus, the generated constraint can be solved
using only linear constraint solving~\cite[Ch.\ 6]{Leike13}.
In our experiments,
many programs that have a multiphase ranking function
also have a nested ranking function.
Hence it is a viable and faster alternative to the multiphase template in practice.

\begin{example}\label{ex:multiphase-but-no-nested}
The multiphase template is strictly more powerful than the nested template;
consider the following loop program $\loopt$.
\[
          (q > 0 \;\lor\; y > 0)
\;\land\; y' = y - 1
\;\land\; q' \leq q
\;\land\; (y \leq 0 \rightarrow q' = q - 1)
\]
This program has the $2$-phase ranking function constructed from
$f_1(q, y) = y$ and $f_2(q, y) = q$.
Since there are no upper or lower bounds on $q$ and $y$,
only the constant function $f_i(q, y) = \gamma_i$
can be positive for all $q$, $y$.
By \eqref{eq:rt-nested4} if $f_i$ is constant, then $f_{i-1}$ is positive.
By induction we get that $f_1$ must be constant,
a contradiction to \eqref{eq:rt-nested3}.
\end{example}

Despite their simplicity,
nested ranking templates are already quite powerful.
We give two nontrivial examples below
that each have a nested ranking function.
These ranking functions were found automatically.
%by our tool \texttt{Ultimate LassoRanker}%
%\footnote{\url{http://ultimate.informatik.uni-freiburg.de/LassoRanker/}}.

\begin{example}[Rotation53]\label{ex:spirals}
Consider the following conjunctive linear loop program.
\[
q > 0
\land q' = q + a - 1
\land a' = \tfrac{3}{5} a - \tfrac{4}{5} b
\land b' = \tfrac{4}{5} a + \tfrac{3}{5} b
\]
During the execution of this loop,
the vector $(a, b)$ is rotated around $0$
by the irrational angle $\arccos(3/5) \approx 53.13$ degrees.
In the long run,
the contribution of $a$ to $q$ cancels out,
and $q$ decreases on average, hence the program terminates.
This program has a $3$-nested ranking function constructed from
the affine-linear functions
\renewcommand{\qed}{} % dirty hack to get the diamond in the correct line
\[
f_1(q, a, b) = 2q + a - 2b,
\quad\quad
f_2(q, a, b) = 4q + 5a,
\quad\text{and}\quad
f_3(q, a, b) = 5q.
\tag*{$\Diamond$} % dirty hack to get the diamond in the correct line
\]
\end{example}

\begin{example}[Crazy Spirals]\label{ex:crazy-spirals}
We can also take the previous example to more extremes;
consider the following conjunctive linear loop program.
\[
      q > 0
\land q' = q + a - 1
\land a' = 3a - 5b + c
\land b' = 12a + 3b
\land c' = 3c - 4d
\land d' = 4c + 3d
\]
During program execution,
the vector $(c, d)$ moves on an outward spiral centered at $0$;
the vector $(a, b)$ does the same except that it is offset by $c$.
On average, the contribution from these spirals to $q$ cancel out,
so $q$ decreases on average.
This program has a $7$-nested ranking function.
\end{example}

%%%%%%%%%%%%%%%%%%%%%%%%%%%%%
\subsection{The Piecewise Ranking Template}
\label{ssec:rt-pw}

The piecewise ranking template formalizes a ranking function that is defined piecewise using affine-linear
predicates to discriminate the pieces.

\begin{definition}[Piecewise Ranking Template]
\label{def:rt-pw}
We define the \emph{$k$-piece ranking template} with
parameters $D = \{ \delta \}$ and
affine-linear function symbols $F = \{ f_1, \ldots, f_k, g_1, \ldots, g_k \}$
as follows.
\begin{align}
&\delta > 0 \label{eq:rt-pw1} \\
\land\; &\bigwedge_{i=1}^k \bigwedge_{j=1}^k \Big( g_i(x) < 0
  \;\lor\; g_j(x') < 0 \;\lor\; f_j(x') < f_i(x) - \delta \Big) \label{eq:rt-pw2} \\
\land\; &\bigwedge_{i=1}^k f_i(x) > 0 \label{eq:rt-pw3} \\
\land\; &\bigvee_{i=1}^k g_i(x) \geq 0 \label{eq:rt-pw4}
\end{align}
\end{definition}
\noindent We call the affine-linear function symbols $\{ g_i \mid 1 \leq i \leq k \}$ \emph{discriminators} and
the affine-linear function symbols $\{ f_i \mid 1 \leq i \leq k \}$ \emph{ranking pieces}.

The disjunction \eqref{eq:rt-pw4} states that the discriminators cover all states;
in other words, the piecewise defined ranking function is a total function.
Given the $k$ different pieces $f_1, \ldots, f_k$ and a state $x$,
we use $f_i$ as a ranking function only if $g_i(x) \geq 0$ holds.
This choice need not be unambiguous;
the discriminators may overlap.
If they do, we can use any one of their ranking pieces.
According to \eqref{eq:rt-pw3}, all ranking pieces are positive-valued and
by \eqref{eq:rt-pw2}, ranking piece transitions are well-defined:
the rank of the new state is always less than
the rank of any of the ranking pieces assigned to the old state.

\begin{example}\label{ex:2-piece}
Consider the following linear loop program.
\begin{align*}
&(q > 0 \;\land\; p > 0 \;\land\; q < p \;\land\; q' = q - 1) \\
\lor\; &(q > 0 \;\land\; p > 0 \;\land\; p < q \;\land\; p' = p - 1)
\end{align*}
In every loop iteration, the minimum of $p$ and $q$ is decreased by $1$ until it becomes negative.
Thus, this program is ranked by the 2-piece ranking function constructed from
the ranking pieces $f_1(p, q) = p$ and $f_2(p, q) = q$
with step size $\delta = 1/2$ and
discriminators $g_1(p, q) = q - p$ and $g_2(p, q) = p - q$.
Moreover, this program does not have a multiphase or lexicographic ranking function:
both $p$ and $q$ may increase without bound during program execution due to non-determinism and
the number of switches between $p$ and $q$ being the minimum value is also unbounded.
\end{example}

\begin{theorem}\label{thm:rt-pw}
The $k$-piece ranking template is a linear ranking template.
\end{theorem}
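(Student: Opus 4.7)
The plan is to follow the scheme established in the proof of \autoref{thm:rt-multiphase}. The syntactic check is immediate: each atom appearing in \eqref{eq:rt-pw1}--\eqref{eq:rt-pw4} can be written in the form required by \autoref{def:linear-rt}, so any boolean combination of such atoms qualifies as well. The main content is to show that every instantiation $\nu(\T_{k\text{-piece}})$ is well-founded, which I would do by exhibiting a ranking function into $\omega$ and appealing to \autoref{lem:rf}.

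Fix an assignment $\nu$, write $\delta$ for $\nu(\delta)$, and let $\widehat{f_i}$ denote the ordinal ranking equivalent (\autoref{def:ordinal-ranking}) of $\nu(f_i)$ with step size $\delta$. I would set
\[
\rho(x) := \max \{ \widehat{f_i}(x) \mid 1 \leq i \leq k \text{ and } g_i(x) \geq 0 \},
\]
with the convention $\rho(x) := 0$ when the set on the right is empty. To verify $\rho(x) > \rho(x')$ for an arbitrary $(x, x') \in \nu(\T_{k\text{-piece}})$, I would invoke \eqref{eq:rt-pw4} to pick some $i_0$ with $g_{i_0}(x) \geq 0$; by \eqref{eq:rt-pw3} we have $f_{i_0}(x) > 0$, hence $\rho(x) \geq \widehat{f_{i_0}}(x) \geq 1$. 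If no $g_j(x') \geq 0$ holds, then $\rho(x') = 0 < \rho(x)$ and we are done. Otherwise, choose $j_0$ realising the maximum at $x'$; instantiating \eqref{eq:rt-pw2} at $i_0$ and $j_0$ gives $f_{j_0}(x') < f_{i_0}(x) - \delta$, and a ceiling calculation in the spirit of \autoref{lem:ordinal-ranking} then yields $\rho(x') = \widehat{f_{j_0}}(x') < \widehat{f_{i_0}}(x) \leq \rho(x)$.

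The main subtlety to anticipate is that \autoref{lem:ordinal-ranking} compares ordinal ranking equivalents of a \emph{single} affine-linear function along a transition, whereas here one must compare $\widehat{f_{j_0}}(x')$ with $\widehat{f_{i_0}}(x)$ for possibly different indices drawn from different discriminators. Taking $\rho$ to be the maximum over the firing discriminators dissolves this difficulty: the required inequality only has to hold between the specific $j_0$ realising $\rho(x')$ and \emph{some} $i_0$ firing at $x$, rather than uniformly for all compatible pairs. The residual ceiling manipulation is elementary — from $0 < f_{j_0}(x') < f_{i_0}(x) - \delta$ one extracts $\lceil f_{j_0}(x')/\delta \rceil \leq \lceil f_{i_0}(x)/\delta \rceil - 1$, while $f_{j_0}(x') \leq 0$ forces $\widehat{f_{j_0}}(x') = 0$ and thus makes the inequality trivial, so no further difficulty is expected.
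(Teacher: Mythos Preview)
Your proposal is correct and follows essentially the same approach as the paper: define $\rho(x)$ as the maximum of the $\widehat{f_i}(x)$ over firing discriminators, then use \eqref{eq:rt-pw2} together with a ceiling argument in the style of \autoref{lem:ordinal-ranking} to obtain $\rho(x') < \rho(x)$ and conclude via \autoref{lem:rf}. If anything, your version is slightly more careful in that you explicitly set $\rho(x') = 0$ when no discriminator fires at $x'$, a case the paper glosses over since \eqref{eq:rt-pw4} constrains only $x$; the paper also takes $i_0$ to realise the maximum at $x$, whereas you take an arbitrary firing index, but both choices work for the same reason.
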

\begin{proof}
The $k$-piece ranking template conforms to
the linear ranking template's syntactic requirements.
Let $\nu$ be an assignment to the parameter $\delta$
and the affine-linear function symbols $F$
of the $k$-piece template $\T_{k\text{-piece}}$ be given.
Consider the following ranking function with codomain $\omega$.
\begin{equation}\label{eq:rf-pw}
\rho(x) := \max \big\{ \widehat{f_i}(x) \mid g_i(x) \geq 0 \big\}
\end{equation}
The function $\rho$ is well-defined,
because the set $\{ \widehat{f_i}(x) \mid g_i(x) \geq 0 \}$ is not empty
according to \eqref{eq:rt-pw4}.
Let $(x, x') \in \nu(\T_{k\text{-piece}})$ and
let $i$ and $j$ be indices such that
$\rho(x) = \widehat{f_i}(x)$ and $\rho(x') = \widehat{f_j}(x')$.
By the definition of $\rho$,
we have that $g_i(x) \geq 0$ and $g_j(x') \geq 0$, and
\eqref{eq:rt-pw2} thus implies $f_j(x') < f_i(x) - \delta$.
Using \eqref{eq:rt-pw3},
we prove analogously to \autoref{lem:ordinal-ranking},
that this entails $\widehat{f_j}(x') < \widehat{f_i}(x)$
and therefore $\rho(x') < \rho(x)$.
\autoref{lem:rf} now implies that
$\nu(\T_{k\text{-piece}})$ is well-founded.
\end{proof}

%%%%%%%%%%%%%%%%%%%%%%%%%%%%%
\subsection{The Lexicographic Ranking Template}
\label{ssec:rt-lex}

Lexicographic ranking functions consist of lexicographically ordered components of affine-linear functions.
A state is mapped to a tuple of values such that the loop transition leads to a decrease with respect to the lexicographic ordering for this tuple.
Therefore no function may increase unless a function of a lower index decreases.
Additionally, at every step, there must be at least one function that decreases.

There are different definitions of lexicographic ranking functions
in circulation~\cite{ADFG10,BAG13,BMS05linrank};
a comparison can be found in \cite[Sec.\ 2.4]{BAG13}.
Each of these definitions for lexicographic linear ranking functions
can be formalized using linear ranking templates.
Here we are following the definition of \cite{ADFG10}.
This definition is the weakest,
but for the other definitions
the ranking template has an exponentially larger CNF,
and hence our method performs comparatively poorly on them.

\begin{definition}[Lexicographic Ranking Template]\label{def:rt-lex}
We define the \emph{$k$-lexi\-co\-gra\-phic ranking template}
with parameters $D = \{ \delta_1, \ldots, \delta_k \}$ and
affine-linear function symbols $F = \{ f_1, \ldots, f_k \}$ as follows.
\begin{align}
&\bigwedge_{i=1}^{k} \delta_i > 0 \label{eq:rt-lex1} \\
\land\; &\bigwedge_{i=1}^k f_i(x) > 0 \label{eq:rt-lex2} \\
\land\; &\bigwedge_{i=1}^{k-1} \Big( f_i(x') \leq f_i(x)
  \;\lor\; \bigvee_{j=1}^{i-1} f_j(x') < f_j(x) - \delta_j \Big) \label{eq:rt-lex3} \\
\land\; &\bigvee_{i=1}^k f_i(x') < f_i(x) - \delta_i \label{eq:rt-lex4}
\end{align}
\end{definition}
\noindent The conjunction \eqref{eq:rt-lex2} establishes that all lexicographic components $f_1, \ldots, f_k$ have positive values.
In every step, at least one component must decrease according to \eqref{eq:rt-lex4}.
From \eqref{eq:rt-lex3} follows that
all functions corresponding to components of larger index
than the decreasing function may increase.

\begin{example}\label{ex:rt-lex}
Consider the following linear loop program.
\begin{align*}
&(a > 0 \land b > 5 \land a' = a \land b' = b - 1) \\
\lor\;
&(a > 0 \land b > 0 \land a' = a - 1 \land b' > 0)
\end{align*}
When taking the first disjunct, $b$ decreases until it becomes $\leq 5$.
Hence we take the second disjunct eventually, decreasing $a$.
Because $a$ does not increase when taking the first disjunct,
we can only take the second disjunct finitely many times.
Since the second disjunct is always taken eventually, the program terminates.

This is proved by the $2$-lexicographic ranking function constructed from
the components $f_1(a, b) = a$ and $f_2(a, b) = b$.
\end{example}

Note that the program from \autoref{ex:rt-lex} does not have a multiphase ranking function
and the program from \autoref{fig:multiphase-introduction} on page \pageref{fig:multiphase-introduction}
does not have a lexicographic ranking function.
Thus the multiphase ranking template and the lexicographic ranking template
are incomparable in expressive power.

\begin{theorem}\label{thm:rt-lex}
The $k$-lexicographic ranking template is a linear ranking template.
\end{theorem}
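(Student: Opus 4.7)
The plan is to mirror the approach used for Theorem 3.8 (the multiphase case): first observe that the template conforms syntactically to Definition 3.1, then, for an arbitrary assignment $\nu$, exhibit an ordinal-valued ranking function for $\nu(\T_{k\text{-lex}})$ and appeal to Lemma 3.5. The natural codomain is the ordinal $\omega^k$, with the function
\[
\rho(x) \;:=\; \omega^{k-1}\,\widehat{f_1}(x) + \omega^{k-2}\,\widehat{f_2}(x) + \cdots + \omega^0\,\widehat{f_k}(x),
\]
where each ordinal ranking equivalent $\widehat{f_i}$ is taken with step size $\delta_i$. The key property I will use is that every ordinal below $\omega^k$ has a unique Cantor normal form in this shape, and that such ordinals are compared lexicographically in the coefficients of descending powers of $\omega$.

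Next, for a fixed pair $(x,x')\in \nu(\T_{k\text{-lex}})$, I would reason as follows. By \eqref{eq:rt-lex4} there is at least one index $i$ with $f_i(x') < f_i(x) - \delta_i$; let $i_0$ be the smallest such index. For every $j < i_0$, instantiating \eqref{eq:rt-lex3} at $i=j$ and using the minimality of $i_0$ to rule out the second disjunct yields $f_j(x') \leq f_j(x)$; combined with $f_j(x)>0$ from \eqref{eq:rt-lex2}, a short case distinction on whether $f_j(x')>0$ gives $\widehat{f_j}(x') \leq \widehat{f_j}(x)$. At the critical index $i_0$, the hypotheses of Lemma 3.6 are met (using \eqref{eq:rt-lex2} for positivity and the choice of $i_0$ for the step-size drop), so $\widehat{f_{i_0}}(x) > \widehat{f_{i_0}}(x')$.

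From here the proof reduces to ordinal arithmetic. If some earlier coefficient $\widehat{f_j}(x')$ ($j<i_0$) is strictly smaller than $\widehat{f_j}(x)$, then lexicographic comparison in Cantor normal form immediately gives $\rho(x')<\rho(x)$. Otherwise all coefficients below $i_0$ agree, and the strict inequality $\widehat{f_{i_0}}(x') + 1 \leq \widehat{f_{i_0}}(x)$ together with the absorption identity $\omega^{k-i_0}\,m + \alpha < \omega^{k-i_0}\,(m+1)$ for any ordinal $\alpha < \omega^{k-i_0}$ shows that the tail of $\rho(x')$ starting at index $i_0$ is strictly bounded by $\omega^{k-i_0}\,\widehat{f_{i_0}}(x)$, and hence $\rho(x')<\rho(x)$. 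Lemma 3.5 then yields well-foundedness of $\nu(\T_{k\text{-lex}})$.

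The main obstacle is the bookkeeping with non-commutative ordinal arithmetic: I must be careful that unbounded growth of the natural-number coefficients $\widehat{f_j}(x')$ for $j>i_0$ really is dominated by a single-unit drop of the $\omega^{k-i_0}$-coefficient. This is exactly the content of the absorption identity above, and it is the one place where writing down the Cantor normal form explicitly (rather than waving at lexicographic order) is important; everything else in the proof is a routine adaptation of the pattern already used for Theorem 3.8.
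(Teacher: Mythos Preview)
Your proposal is correct and follows essentially the same approach as the paper: the paper also defines $\rho(x) = \sum_{i=1}^k \omega^{k-i}\,\widehat{f_i}(x)$, picks a minimal index where a strict drop occurs, uses \eqref{eq:rt-lex3} to obtain $\widehat{f_j}(x') \leq \widehat{f_j}(x)$ for all earlier $j$, and concludes via the same ordinal absorption argument. The only cosmetic difference is that the paper selects the minimal $i$ with $\widehat{f_i}(x') < \widehat{f_i}(x)$ rather than with $f_i(x') < f_i(x) - \delta_i$, which slightly shortens the final case analysis but is otherwise equivalent to your choice.
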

\begin{proof}
The $k$-lexicographic ranking template conforms to
the linear ranking template's syntactic requirements.
Let $\nu$ be an assignment to the parameters $D$
and the affine-linear function symbols $F$
of the $k$-lexicographic template $\T_{k\text{-lex}}$.
Consider the following ranking function with codomain $\omega^k$.
\begin{equation}\label{eq:rf-lex}
\rho(x) := \sum_{i=1}^k \omega^{k-i} \cdot \widehat{f_i}(x)
\end{equation}
Let $(x, x') \in \nu(\T_{k\text{-lex}})$.
From \eqref{eq:rt-lex2} follows $f_j(x) > 0$ for all $j$, so $\rho(x) > 0$.
By \eqref{eq:rt-lex4} and \autoref{lem:ordinal-ranking},
there is a minimal $i$ such that $\widehat{f_i}(x') < \widehat{f_i}(x)$.
According to \eqref{eq:rt-lex3},
we have $\widehat{f_1}(x') \leq \widehat{f_1}(x)$ and hence
inductively $\widehat{f_j}(x') \leq \widehat{f_j}(x)$ for all $j < i$,
since $i$ was minimal.
\begin{align*}
\rho(x') &= \sum_{j=1}^k \omega^{k-j} \cdot \widehat{f_j}(x')
  \leq \sum_{j=1}^{i-1} \omega^{k-j} \cdot \widehat{f_j}(x)
    + \sum_{j=i}^k \omega^{k-j} \cdot \widehat{f_j}(x') \\
  &< \sum_{j=1}^{i-1} \omega^{k-j} \cdot \widehat{f_j}(x)
    + \omega^{k-i} \cdot \widehat{f_i}(x)
  \leq \rho(x)
\end{align*}
Therefore \autoref{lem:rf} implies that
$\nu(\T_{k\text{-lex}})$ is well-founded.
\end{proof}

%%%%%%%%%%%%%%%%%%%%%%%%%%%%%
\subsection{The Parallel Ranking Template}
\label{ssec:rt-parallel}

The parallel ranking template targets programs
that do multiple tasks in parallel
where progress on each task can be nondeterministic.
These tasks have no predetermined order of execution.
We assume that each task can be ranked by an affine-linear ranking function.

\begin{definition}[Parallel Ranking Template]\label{def:rt-parallel}
We define the \emph{$k$-parallel ranking template} with
parameters $D = \{ \delta_1, \ldots, \delta_k \}$ and
affine-linear function symbols $F = \{ f_1, \ldots, f_k \}$ as follows.
\begin{align}
        &\bigwedge_{i=1}^k \delta_i > 0 \label{eq:rt-parallel1} \\
\land\; &\bigwedge_{i=1}^k f_i(x') \leq f_i(x) \label{eq:rt-parallel2} \\
\land\; &\bigvee_{i=1}^k \Big( f_i(x) > 0
           \;\land\; f_i(x') < f_i(x) - \delta_i \Big) \label{eq:rt-parallel3}
\end{align}
\end{definition}

\noindent The ranking functions $f_1, \ldots, f_k$ correspond to $k$ different tasks.
The conjunction \eqref{eq:rt-parallel2} states that
none of the ranking functions may increase at any point.
Moreover, \eqref{eq:rt-parallel3} states that with every transition,
at least one task has to make progress and end in a finite number of steps.
Note that \eqref{eq:rt-parallel3} is not given in conjunctive normal form (CNF).
When transformed in CNF, the number of conjuncts blows up exponentially.

\begin{example}\label{ex:parallel}
Consider the following linear loop program.
\begin{align*}
       &(a > 0 \;\land\; a' = a - 1 \;\land\; b' = b) \\
\lor\; &(b > 0 \;\land\; b' = b - 1 \;\land\; a' = a)
\end{align*}
This programs performs two tasks nondeterministically in parallel:
the first task takes $a$ iterations and the second task takes $b$ iterations;
the program terminates once both tasks have been completed.
The $2$-parallel ranking function constructed from
$f_1(a, b) = a$ and $f_2(a, b) = b$ proves this program terminating.
\end{example}

\begin{theorem}\label{thm:rt-parallel}
The $k$-parallel ranking template is a linear ranking template.
\end{theorem}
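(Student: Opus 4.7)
The proof will parallel the structure of the proofs of Theorems \ref{thm:rt-multiphase}, \ref{thm:rt-pw}, and \ref{thm:rt-lex}. First I would note that the template clearly conforms to the syntactic requirements of Definition \ref{def:linear-rt}, since each atom is either of the form $\delta_i > 0$, $f_i(x) > 0$, $f_i(x') \leq f_i(x)$, or $f_i(x') < f_i(x) - \delta_i$, all of which fit the required linear combination shape. The remaining task is to show that an arbitrary instantiation $\nu(\T_{k\text{-parallel}})$ is a well-founded relation, which I would establish via Lemma \ref{lem:rf} by exhibiting an explicit ranking function.

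The natural candidate is the pointwise sum of the ordinal ranking equivalents:
\[
\rho(x) := \sum_{i=1}^k \widehat{f_i}(x),
\]
where each $\widehat{f_i}$ is taken with step size $\delta_i$. Since each $\widehat{f_i}(x) \in \omega$ and $\omega$ is closed under finite sums, $\rho$ has codomain $\omega$, so it is an ordinal-valued function as required.

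The key step is proving that for $(x,x') \in \nu(\T_{k\text{-parallel}})$ we have $\rho(x') < \rho(x)$. First I would establish the auxiliary monotonicity claim that $f_i(x') \leq f_i(x)$ implies $\widehat{f_i}(x') \leq \widehat{f_i}(x)$ for every $i$; this is a short case analysis on the sign of $f_i(x)$ and $f_i(x')$, using that $\lceil\cdot\rceil$ is monotone and that $\widehat{f_i}$ equals $0$ on the non-positive region. Condition \eqref{eq:rt-parallel2} then yields $\widehat{f_i}(x') \leq \widehat{f_i}(x)$ for all $i$. Condition \eqref{eq:rt-parallel3} supplies an index $i_0$ with $f_{i_0}(x) > 0$ and $f_{i_0}(x') < f_{i_0}(x) - \delta_{i_0}$, which together with \eqref{eq:rt-parallel1} lets me invoke Lemma \ref{lem:ordinal-ranking} to obtain the strict inequality $\widehat{f_{i_0}}(x') < \widehat{f_{i_0}}(x)$. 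Summing term-by-term then gives $\rho(x') < \rho(x)$, because one summand strictly decreases and the others do not increase.

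I do not anticipate any serious obstacle. The main subtlety is the handling of $\widehat{f_i}$ when $f_i$ crosses zero, but this is exactly the situation covered by the case split above and can be discharged in one or two lines. Once $\rho$ is verified to be a ranking function, Lemma \ref{lem:rf} immediately implies that $\nu(\T_{k\text{-parallel}})$ is well-founded, completing the proof.
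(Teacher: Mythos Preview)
Your proposal is correct and matches the paper's proof essentially line for line: the paper uses the same ranking function $\rho(x) = \sum_{i=1}^k \widehat{f_i}(x)$ with codomain $\omega$, derives $\widehat{f_i}(x') \leq \widehat{f_i}(x)$ from \eqref{eq:rt-parallel2}, obtains a strictly decreasing index from \eqref{eq:rt-parallel3} via \autoref{lem:ordinal-ranking}, and concludes by \autoref{lem:rf}. Your explicit case split for the monotonicity of $\widehat{f_i}$ is slightly more detailed than the paper, which simply asserts the implication, but otherwise the arguments are identical.
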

\begin{proof}
The $k$-parallel ranking template conforms to
the linear ranking template's syntactic requirements.
Let $\nu$ be an assignment to the parameters $D$ and
the affine-linear function symbols $F$
of the $k$-parallel template $\T_{k\text{-parallel}}$.
Consider the following ranking function with codomain $\omega$.
\begin{equation}\label{eq:rf-parallel}
\rho(x) := \sum_{i=1}^k \widehat{f_i}(x)
\end{equation}
Let $(x, x') \in \nu(\T_{k\text{-parallel}})$.
From \eqref{eq:rt-parallel2} follows $f_i(x') \leq f_i(x)$ for all $i$,
so $\widehat{f_i}(x') \leq \widehat{f_i}(x)$.
By \eqref{eq:rt-parallel3}, there is a $j$ such that
$f_j(x) > 0$ and $f_j(x') < f_j(x) - \delta_j$.
Therefore we have $\widehat{f_j}(x') < \widehat{f_j}(x)$
by \autoref{lem:ordinal-ranking}.
Hence
\[
     \rho(x')
=    \sum_{i=1}^k \widehat{f_i}(x')
=    \widehat{f_j}(x') + \sum_{i \neq j} \widehat{f_i}(x')
\leq \widehat{f_j}(x') + \sum_{i \neq j} \widehat{f_i}(x)
<    \widehat{f_j}(x)  + \sum_{i \neq j} \widehat{f_i}(x)
=    \rho(x).
\]
Now \autoref{lem:rf} implies that
$\nu(\T_{k\text{-parallel}})$ is well-founded.
\end{proof}

%%%%%%%%%%%%%%%%%%%%%%%%%%%%%%%%%%%%
% Composition of Templates

\section{Composition of Templates}
\label{sec:composition-of-templates}

In this section we discuss how more powerful linear ranking templates
can be constructed based on the linear ranking templates
from \autoref{sec:ranking-templates}.
First, we consider a program that is terminating,
but whose termination cannot be proven
using one of the ranking templates presented so far.

\begin{example}\label{ex:multilex}
Consider the following linear loop program.
\begin{align*}
&(q > 0 \land y > 0 \land y' = y - 1 \land q' = q \land x' = x) \\
\lor\;
&(q > 0 \land y \leq 0 \land q' = q - x \land x' = x + 1)
\end{align*}
When executing the first disjunct, $y$ decreases until it becomes negative.
Then we execute the second disjunct:
we increment $x$, set $y$ to some arbitrary value,
and decrement $q$ if $x$ is positive.
If $y$ was reset to some positive value,
the first disjunct is executed again,
but the values of $q$ and $x$ do not change
until the second disjunct is executed.
Eventually, $x$ is positive
and from then on $q$ is decremented until it is nonpositive;
thus the program terminates.
\end{example}

The program's behavior resembles a lexicographic ranking function with
$q$ as the first component and $y$ as the second.
However, $q$ is decremented only after $x$ becomes positive:
the first component needs $2$ phases.
We want a ranking template for a lexicographic ranking function that
has multiphase ranking functions instead of
affine-linear functions in every component.
How can we construct a linear ranking template for such a ranking function?

Observe that all of our linear ranking templates share the following subformulas.
\begin{enumerate}[label=(\roman*)]
\item $f(x') \leq f(x)$
\item $f(x') < f(x) - \delta$
\item $f(x) > 0$
\end{enumerate}
In the context of ranking templates, these formulas have the following meaning.
\begin{enumerate}[label=(\roman*)]
\item The function $f$ is non-increasing.
\item The function $f$ is decreasing.
\item The codomain of the function $f$ is well-founded.
\end{enumerate}
Here $f$ is always an affine-linear function.
The idea of template composition is to replace the subformulas (i--iii)
with subformulas of the same meaning for more powerful functions.
We next define triples of formulas that are suitable substituents
for (i--iii), called \emph{composed template recipes}.
Afterwards, we will use composed template recipes
to build new linear ranking templates.

\begin{definition}[Composed Template Recipe]
\label{def:composed-template-recipe}
A \emph{composed template recipe}
is defined recursively according to the following rules.
\begin{enumerate}[label=(C\arabic*)]
\item The \emph{PR template recipe
	$(\T^\leq_\text{PR}, \T^<_\text{PR}, \T^{>0}_\text{PR})$}
	is a composed template recipe with
\begin{align*}
\T^\leq_\text{PR} &\equiv f(x') \leq f(x) \\
\T^<_\text{PR}    &\equiv f(x') < f(x) - \delta \;\land\; \delta > 0 \\
\T^{>0}_\text{PR} &\equiv f(x) > 0.
\end{align*}
\item The \emph{$k$-piece template recipe
	$(\T^\leq_{k\text{-piece}}, \T^<_{k\text{-piece}}, \T^{>0}_{k\text{-piece}})$}
	is a composed template recipe with
\begin{align*}
        \T^\leq_{k\text{-piece}}
&\equiv \bigvee_{i=1}^k g_i(x) \geq 0
          \;\land\; \bigwedge_{i=1}^k \bigwedge_{j=1}^k \Big(
            g_i(x) < 0 \;\lor\; g_j(x') < 0 \;\lor\; f_j(x') \leq f_i(x)
          \Big) \\
        \T^<_{k\text{-piece}}
&\equiv \delta > 0 \;\land\; \bigvee_{i=1}^k g_i(x) \geq 0
          \;\land\; \bigwedge_{i=1}^k \bigwedge_{j=1}^k \Big(
            g_i(x) < 0 \lor g_j(x') < 0 \lor f_j(x') < f_i(x) - \delta
          \Big) \\
\T^{>0}_{k\text{-piece}} &\equiv \bigwedge_{i=1}^k f_i(x) > 0.
\end{align*}
\item Given $k$ composed template recipes
	$(\T^\leq_1, \T^<_1, \T^{>0}_1), \ldots, (\T^\leq_k, \T^<_k, \T^{>0}_k)$
	which do not share any parameters or affine-linear function symbols,
	we can construct a composed template recipe $(\T^\leq, \T^<, \T^{>0})$
	according to one of the following three composition rules.
\end{enumerate}
\begin{center}
\renewcommand{\arraystretch}{1.3} % set row spacing
\begin{tabular}{lccc}
Composition rule & $\T^\leq$ & $\T^<$ & $\T^{>0}$ \\
\hline
\hyperref[def:rt-multiphase]{$k$-phase}
	& $\T_1^\leq \land \bigwedge_{i > 1} (\T^\leq_i \lor \T^{>0}_{i-1})$
	& $\T_1^< \land \bigwedge_{i > 1} (\T^<_i \lor \T^{>0}_{i-1})$
	& $\bigvee_i \T^{>0}_i$ \\
\hyperref[def:rt-lex]{$k$-lexicographic}
	& $\bigwedge_{i=1}^k \big( \T^\leq_i \lor \bigvee_{j=1}^{i-1} \T^<_j \big)$
	& $\bigvee_i \T^<_i \land \bigwedge_{i=1}^{k-1}
	     \big( \T^\leq_i \lor \bigvee_{j=1}^{i-1} \T^<_j \big)$
	& $\bigwedge_i \T^{>0}_i$ \\
\hyperref[def:rt-parallel]{$k$-parallel}
	& $\bigwedge_i \T^\leq_i$
	& $\bigwedge_i \T^\leq_i \land \bigvee_i (\T^<_i \land \T^{>0}_i)$
	& $\bigvee_i \T^{>0}_i$
\end{tabular}
\end{center}
\end{definition}
\medskip

\noindent The intuition behind \autoref{def:composed-template-recipe} is that
we build composed templates recursively
using PR template recipes (C1) or $k$-piece template recipes (C2)
as the base case and
plugging them into composition rules given in (C3).

There is a composition rule
for each linear ranking template presented in \autoref{sec:ranking-templates}
but the $k$-piece ranking template and the $k$-nested ranking template.
We cannot define a $k$-piece or a $k$-nested composition rule analogously,
because not all of these ranking templates' atoms are of the form (i--iii) above:
they also have atoms
containing multiple affine-linear function symbols
($f_i(x') < f_i(x) + f_{i-1}(x)$ in \eqref{eq:rt-nested4} and
$f_j(x') < f_i(x) - \delta$ in \eqref{eq:rt-pw2}).

Given a composed template recipe $(\T^\leq, \T^<, \T^{>0})$,
we call the conjunction $\T^< \land \T^{>0}$ a \emph{composed template.}
The following theorem states that
composed templates are linear ranking templates.

\begin{theorem}% [From Composed Template Recipes To Linear Ranking Templates]
\label{thm:composed-template-recipe}
If $(\T^\leq, \T^<, \T^{>0})$ is a composed template recipe,
then the composed template $\T^< \land \T^{>0}$ is a linear ranking template.
\end{theorem}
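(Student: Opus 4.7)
The plan is structural induction on the composed template recipe. For the induction to go through I need a stronger invariant than mere well-foundedness, since the composition rules in (C3) use all three components $\T^\leq$, $\T^<$, $\T^{>0}$ of the sub-recipes, not just $\T^< \land \T^{>0}$. The invariant I would carry says that for every parameter assignment $\nu$ there exist an ordinal $\alpha_\nu$ and a map $\rho_\nu : \mathbb{K}^n \to \alpha_\nu$ such that (a) $(x, x') \in \nu(\T^\leq) \Rightarrow \rho_\nu(x) \geq \rho_\nu(x')$, (b) $x \in \nu(\T^{>0})$ and $(x, x') \in \nu(\T^<) \Rightarrow \rho_\nu(x) > \rho_\nu(x')$, and (c) $x \in \nu(\T^{>0}) \Rightarrow \rho_\nu(x) > 0$. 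Clauses (b) and (c) make $\rho_\nu$ a ranking function for $\nu(\T^< \land \T^{>0})$, so \autoref{lem:rf} delivers the theorem. The syntactic conformance of $\T^< \land \T^{>0}$ with \autoref{def:linear-rt} is preserved by boolean combinations, so it only needs to be checked on the atoms in (C1) and (C2), where it is immediate.

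For the base cases I reuse the ranking functions from the earlier proofs: for (C1) take $\rho_\nu := \widehat{\nu(f)}$ with step size $\nu(\delta)$, whose invariants reduce to the monotonicity of $\widehat{f}$ and to \autoref{lem:ordinal-ranking}; for (C2) take $\rho_\nu(x) := \max\{\widehat{\nu(f_i)}(x) \mid \nu(g_i)(x) \geq 0\}$, with the empty maximum set to $0$, and verify (a)--(c) by choosing indices that attain the maxima at $x$ and $x'$, as in the proof of \autoref{thm:rt-pw}.

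For the inductive step in (C3) I combine the sub-ranking functions $\rho_1, \ldots, \rho_k$ of codomains $\alpha_1, \ldots, \alpha_k$ using ordinal arithmetic with an ordinal $\beta$ bounding every $\alpha_i$. For the $k$-lexicographic rule I set $\rho_\nu(x) := \sum_{i=1}^{k} \beta^{k-i} \cdot \rho_i(x)$ and reproduce the lexicographic calculation of \autoref{thm:rt-lex} with IH (a) and (b) in place of the direct appeal to \autoref{lem:ordinal-ranking}. The $k$-parallel rule uses the same encoding: $\bigwedge_i \T^\leq_i$ inside $\T^<$ delivers non-increase on every coordinate by IH (a); the existentially quantified $\T^<_i \land \T^{>0}_i$ delivers a strict decrease in some coordinate by IH (b); and IH (c) yields positivity of $\rho_\nu$ whenever some $\T^{>0}_i$ holds. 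For the $k$-phase rule I define $\rho_\nu(x) := \beta \cdot (k - i(x)) + \rho_{i(x)}(x)$, where $i(x) := \min\{i \mid x \in \nu(\T^{>0}_i)\}$ and $\rho_\nu(x) := 0$ when this set is empty, extending \autoref{thm:rt-multiphase}.

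The main obstacle is the $k$-phase case, because a priori $i(x')$ could be smaller than $i(x)$, in which case the leading term $\beta \cdot (k - i(x'))$ would dominate and the inequality $\rho_\nu(x) > \rho_\nu(x')$ would fail. To rule this out I would add a fourth invariant (d) ``$(x, x') \in \nu(\T^<)$ and $x \notin \nu(\T^{>0})$ imply $x' \notin \nu(\T^{>0})$'' to the induction, so that once $\T^{>0}_j$ fails at the source of a $\T^<_j$-transition it must fail at the target. Invariant (d) is immediate for (C1), and can be derived for (C2) by reading off the $\T^<_{k\text{-piece}}$ inequality at an index where $f_i$ is already non-positive; its preservation under (C3) runs in parallel with the (a)--(c) cases. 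Once (d) is in place, the ``phase preservation'' step of the proof of \autoref{thm:rt-multiphase} transfers: the conjunctive shape of the composed $\T^<$ together with the minimality of $i(x)$ forces $(x, x') \in \nu(\T^<_j)$ for every $j \leq i(x)$, and applying (d) to the indices $j < i(x)$ yields $i(x') \geq i(x)$, after which the strict decrease is supplied either by the leading $\beta \cdot (k - i)$ term (if $i(x') > i(x)$) or by IH (b) applied to the $i(x)$-th sub-recipe (if $i(x') = i(x)$).
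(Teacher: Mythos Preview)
Your overall plan---structural induction with a strengthened invariant that tracks all three of $\T^\leq$, $\T^<$, $\T^{>0}$---is exactly what the paper does, and your choice of ranking functions in the inductive step mirrors the paper's. The gap is in the formulation of the strict-decrease invariant. You take the hypothesis in (b) to be $x \in \nu(\T^{>0})$, whereas the paper takes it to be $\rho_\nu(x) > 0$. Your choice forces you to introduce the extra invariant (d), and (d) is false for the $k$-piece recipe (C2).

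A concrete counterexample: take $k = 2$, two variables, $f_1(x) = g_1(x) = x_1$, $f_2(x) = g_2(x) = x_2$, $\delta = 1$. For $x = (5, -1)$ and $x' = (3, 2)$ one checks that $(x, x') \in \nu(\T^<_{2\text{-piece}})$: the clause $\bigvee_i g_i(x) \geq 0$ holds via $g_1(x) = 5$, and since $g_2(x) < 0$ the four implications reduce to $f_1(x') < f_1(x) - 1$ and $f_2(x') < f_1(x) - 1$, both of which hold. Yet $x \notin \nu(\T^{>0}_{2\text{-piece}})$ because $f_2(x) = -1 \leq 0$, while $x' \in \nu(\T^{>0}_{2\text{-piece}})$ because $f_1(x') = 3 > 0$ and $f_2(x') = 2 > 0$. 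Your justification---``reading off the $\T^<_{k\text{-piece}}$ inequality at an index where $f_i$ is already non-positive''---breaks down because the index $i$ with $f_i(x) \leq 0$ need not satisfy $g_i(x) \geq 0$, and then every $(i,j)$-clause is vacuous. The analogous statement for $\T^\leq$, which you would also need in order to push invariant (a) through the $k$-phase step with your phase indicator $i(x) = \min\{i \mid x \in \nu(\T^{>0}_i)\}$, fails for the same reason.

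The fix is what the paper does: replace the hypothesis $x \in \nu(\T^{>0})$ in (b) by $\rho_\nu(x) > 0$, and in the $k$-phase step define the phase as $i(x) := \min\{i \mid \rho_i(x) > 0\}$ rather than via membership in $\nu(\T^{>0}_i)$. Then for $j < i(x)$ you have $\rho_j(x) = 0$, which by the contrapositive of (c) gives $x \notin \nu(\T^{>0}_j)$; the conjunctive shape of the composed $\T^<$ (or $\T^\leq$) then yields $(x, x') \in \nu(\T^<_j)$ (resp.\ $\nu(\T^\leq_j)$), and invariant (a) directly gives $\rho_j(x') \leq \rho_j(x) = 0$. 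This shows the phase cannot decrease without ever needing anything like (d).
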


The proof of \autoref{thm:composed-template-recipe}
is deferred to the end of this section.

\begin{example}[The $k$-Phase Composition Rule]
\label{ex:k-phase-composition-rule}
We apply the $k$-phase composition rule to $k$ PR template recipes.
Let $D := \{ \delta_i \mid 1 \leq i \leq k \}$ be parameters and
let $F = \{ f_i \mid 1 \leq i \leq k \}$
be affine-linear function symbols.
For each $i$, we have three formulas
$\T^\leq_{\text{PR},i}$, $\T^<_{\text{PR},i}$, and $\T^{>0}_{\text{PR},i}$.
Using the $k$-phase composition rule
from \autoref{def:composed-template-recipe} (C3),
we get the composed template recipe
$(\T^\leq_{k\text{-phase}}, \T^<_{k\text{-phase}}, \T^{>0}_{k\text{-phase}})$
where
\begin{align*}
        &\T^\leq_{k\text{-phase}}
\equiv f_1(x') \leq f_1(x) \;\land \bigwedge_{i=2}^k (
         f_i(x') \leq f_i(x) \;\lor\; f_{i-1}(x) > 0
       ), \\
&\begin{aligned}
        \T^<_{k\text{-phase}}
\equiv &f_1(x') < f_1(x) - \delta_1 \;\land\; \delta_1 > 0 \\
       &\land \bigwedge_{i=2}^k (
          (f_i(x') < f_i(x) - \delta_i \;\land\; \delta_i > 0)
          \;\lor\; f_{i-1}(x) > 0
        ),
%\equiv &\bigwedge_{i=1}^k \delta_i > 0
%        \;\land\; f_1(x') < f_1(x) - \delta_1 \\
%       &\land \bigwedge_{i=2}^k (
%          f_i(x') < f_i(x) - \delta_i \;\lor\; f_{i-1}(x) > 0
%        )
\end{aligned} \\
        &\T^{>0}_{k\text{-phase}}
\equiv \bigvee_{i=1}^k f_i(x) > 0.
\end{align*}
By \autoref{thm:composed-template-recipe},
$\T^<_{k-\text{phase}} \land \T^{>0}_{k-\text{phase}}$
is a linear ranking template.
In fact, we already know this from \autoref{thm:rt-multiphase},
because the formula $\T^<_{k\text{-phase}} \land \T^{>0}_{k\text{-phase}}$
is equivalent to the $k$-phase ranking template.
\end{example}

\begin{remark}\label{rem:composing-PR-template-recipes}\hfill
\begin{enumerate}[label=(\roman*)]
\item Let $(\T^\leq, \T^<, \T^{>0})$ be the $k$-phase composition rule
	applied to $k$ PR template recipes.
	Then the composed template $\T^{>0} \land \T^<$
	is equivalent to the $k$-phase ranking template.
\item Let $(\T^\leq, \T^<, \T^{>0})$ be the $k$-lexicographic composition rule
	applied to $k$ PR template recipes.
	Then the composed template $\T^{>0} \land \T^<$ is equivalent to
	the $k$-lexicographic ranking template.
\item Let $(\T^\leq, \T^<, \T^{>0})$ be the $k$-parallel composition rule
	applied to $k$ PR template recipes.
	Then the composed template $\T^{>0} \land \T^<$ is equivalent to
	the $k$-parallel ranking template.
\end{enumerate}
\end{remark}
\begin{proof}
From \autoref{def:rt-multiphase}, \autoref{def:rt-lex},
and \autoref{def:rt-parallel}.
\end{proof}

Next, we construct a composed template
to prove termination of \autoref{ex:multilex}.

\begin{example}\label{ex:l-lex-k-phase-composition-rule}
We apply the $\ell$-lexicographic composition rule to $\ell$ copies of the
composed template recipe from \autoref{ex:k-phase-composition-rule}.
Let $D := \{ \delta_{i,j} \mid 1 \leq i \leq k, 1 \leq j \leq \ell \}$
be parameters and
let $F = \{ f_{i,j} \mid 1 \leq i \leq k, 1 \leq j \leq \ell \}$
be affine-linear function symbols.
For each $j$,
we apply the $k$-phase composition rule to $k$ PR template recipes
as in \autoref{ex:k-phase-composition-rule},
using the parameters $D_j := \{ \delta_{i,j} \mid 1 \leq i \leq k \}$ and
the affine-linear function symbols $F_j := \{ f_{i,j} \mid 1 \leq i \leq k \}$.
Let the resulting composed template recipe be denoted
$(\T^\leq_{k\text{-phase},j}, \T^<_{k\text{-phase},j}, \T^{>0}_{k\text{-phase},j})$.
Next, we apply the $\ell$-lexicographic composition rule to
the $\ell$ composed template recipes
$(\T^\leq_{k\text{-phase},j}, \T^<_{k\text{-phase},j}, \T^{>0}_{k\text{-phase},j})$
resulting in the composed template recipe
$(\T^\leq_{\text{lm}}, \T^<_{\text{lm}}, \T^{>0}_{\text{lm}})$
where
\begingroup
\allowdisplaybreaks
\begin{align*}
&\begin{aligned}
\T^\leq_{\text{lm}} \equiv
&\bigwedge_{j=1}^\ell \Bigg(
   \Big( f_{1,j}(x') \leq f_{1,j}(x)
     \land \bigwedge_{i=2}^k (f_{i,j}(x') \leq f_{i,j}(x)
       \lor f_{i-1,j}(x) > 0) \Big) \\
&\quad\lor\; \bigvee_{t=1}^{j-1}
  \Big( f_{1,t}(x') < f_{1,t}(x) - \delta_{1,t} \land \delta_{1,t} > 0 \\
&\quad\quad\quad\land \bigwedge_{i=2}^k (
       (f_{i,t}(x') < f_{i,t}(x) - \delta_{i,t} \land \delta_{i,t} > 0)
       \lor f_{i-1,t}(x) > 0) \Big)
\Bigg),
\end{aligned} \\
&\begin{aligned}
\T^<_{\text{lm}} \equiv
&\bigvee_{j=1}^\ell \Bigg(
  f_{1,j}(x') < f_{1,j}(x) - \delta_{1,j} \land \delta_{1,j} > 0 \\
&\quad\land \bigwedge_{i=2}^k (
  (f_{i,j}(x') < f_{i,j}(x) - \delta_{i,j} \land \delta_{i,j} > 0)
  \lor f_{i-1,j}(x) > 0) \Bigg) \\
&\land \bigwedge_{j=1}^{\ell-1} \Bigg(
  \Big( f_{1,j}(x') \leq f_{1,j}(x)
    \land \bigwedge_{i=2}^k (f_{i,j}(x') \leq f_{i,j}(x)
      \lor f_{i-1,j}(x) > 0) \Big) \\
&\qquad\lor\; \bigvee_{t=1}^{j-1}
  \Big( f_{1,t}(x') < f_{1,t}(x) - \delta_{1,t} \land \delta_{1,t} > 0 \\
&\qquad\qquad\land \bigwedge_{i=2}^k (
   (f_{i,t}(x') < f_{i,t}(x) - \delta_{i,t} \land \delta_{i,t} > 0)
   \lor f_{i-1,t}(x) > 0) \Big)
\Bigg),
\end{aligned} \\
&\begin{aligned}
\T^{>0}_{\text{lm}} \equiv
\bigwedge_{j=1}^\ell \bigvee_{i=1}^k f_{i,j}(x) > 0.
\end{aligned}
\end{align*}
\endgroup
By \autoref{thm:composed-template-recipe},
$\T^<_{\text{lm}} \land \T^{>0}_{\text{lm}}$
is a linear ranking template.
\end{example}

\begin{example}\label{ex:multilex2}
Using the composed template recipe from \autoref{ex:l-lex-k-phase-composition-rule},
we can find a ranking function for \autoref{ex:multilex}:
\renewcommand{\qed}{} % dirty hack to get the diamond in the correct line
\begin{align*}
f_{1,1}(q, x, y) &= 1 - x &&
f_{1,2}(q, x, y) = q \\
f_{2,1}(q, x, y) &= y &&
f_{2,2}(q, x, y) = y
\tag*{$\Diamond$} % dirty hack to get the diamond in the correct line
\end{align*}
\end{example}

\begin{proof}[Proof of \autoref{thm:composed-template-recipe}]
First, we need to check the syntactic requirements.
This was already shown for the PR ranking template
and the $k$-piece ranking template.
Any substitution is a boolean combination of parts of simpler templates,
and the syntactic requirements for linear ranking templates allow for
arbitrary boolean combinations of atoms.

To show well-foundedness, we prove the following statement by induction over
the recursive construction of the composed template recipes.
We show that
for all assignments $\nu$ to the parameters and affine-linear function symbols,
we find a function $\rho: \Sigma \to \alpha$
from the program states $\Sigma$ to some ordinal $\alpha$ such that
\begin{enumerate}[label=(\roman*)]
\item $\nu(\T^\leq)(x, x')$ implies $\rho(x') \leq \rho(x)$.
\item $\nu(\T^<)(x, x')$ and $\rho(x) > 0$ imply $\rho(x') < \rho(x)$.
\item $\nu(\T^{>0})(x, x')$ implies $\rho(x) > 0$.
\end{enumerate}
For the base case, we have a PR template recipe or
a $k$-piece template recipe,
and we get a ranking function $\rho: \Sigma \to \omega$.
Claims (ii) and (iii) follow from
\autoref{lem:ordinal-ranking} and \autoref{thm:rt-pw}.
For the PR template recipe, claim (i) holds because $f(x') \leq f(x)$ implies
$\widehat{f}(x') \leq \widehat{f}(x)$.
For the $k$-piece template recipe,
we prove this analogously to the proof of \autoref{thm:rt-pw},
using $f_i(x') \leq f_i(x)$ instead of $f_i(x') < f_i(x) - \delta$.

For the induction step, assume that claims (i--iii) hold for
the composed template recipes
$(\T^\leq_1, \T^<_1, \T^{>0}_1)$, $\ldots$, $(\T^\leq_k, \T^<_k, \T^{>0}_k)$.
Thus for every $i = 1 \ldots k$, we have a ranking function
$\rho_i: \Sigma \to \alpha_i$ with ordinal $\alpha_i$ as codomain.
We consider the three inductive cases in turn.
\begin{itemize}
\item \emph{$k$-phase}: We define the ranking function
\[
\rho(x) :=
\begin{cases}
\sum_{j=1}^{i-1} \alpha_j + \rho_i(x) & \text{if }
\rho_j(x) = 0 \text{ for all } j < i \text{ and } \rho_i(x) > 0, \\
0 & \text{otherwise.}
\end{cases}
\]
Let $(x, x') \in \nu(\T^\leq)$,
and let $i$ be the current phase, i.e.,
$\rho_i(x) > 0$ and $\rho_j(x) = 0$ for all $j < i$.
For all $j < i$, we have
$(x, x') \in \nu(\T^\leq_j)$ and $(x, x') \notin \nu(\T^{>0}_j)$ by inductive hypothesis,
and hence $\rho_j(x') \leq \rho_j(x) = 0$.
Therefore we obtain $(x, x') \in \nu(\T^\leq_i)$ and hence $\rho_i(x') \leq \rho_i(x)$
(note the subscript $i$ instead of $j$),
which implies $\rho(x') \leq \rho(x)$ in case $\rho_i(x') > 0$.
Otherwise we have a phase transition and thus
$\rho(x') = 0 < \rho(x)$ or $i < k$.
In the latter case, we know $\rho_{i+1}(x') < \alpha_{i+1}$,
and hence
$
  \rho(x)
> \sum_{j=1}^{i-1} \alpha_j
> \sum_{j=1}^{i-2} \alpha_j + \rho_{i-2}(x')
= \rho(x')
$.

For $(x, x') \in \nu(\T^<)$, we analogously get
$\rho_i(x') < \rho_i(x)$ and hence $\rho(x') < \rho(x)$.
For $(x, x') \in \nu(\T^{>0})$, we have that
$\rho_i(x) > 0$ for some $i$ and hence $\rho(x) > 0$ by the induction hypothesis.

\item \emph{$k$-lexicographic}: We define the ranking function
\[
\rho(x) := \sum_{i=1}^k \rho_i(x) \prod_{j=i+1}^k \alpha_j.
\]
Let $(x, x') \in \nu(\T^\leq)$.
If $(x, x') \in \nu(\T^\leq_i)$ for all $i$,
then we get $\rho_i(x') \leq \rho_i(x)$ by the induction hypothesis, and hence
$\rho(x') \leq \rho(x)$.
Otherwise there is an $n \leq k$ such that $(x, x') \in \nu(\T^<_n)$ and
$(x, x') \in \nu(\T^\leq_j)$ for all $j < n$.
By the induction hypothesis, we obtain $\rho_n(x') < \rho_n(x)$ and
$\rho_j(x') \leq \rho_j(x)$ for all $j < n$.
Since $\rho_i(x') < \alpha_i$, we have
$\prod_{j=n+1}^k \alpha_j > \sum_{i=n+1}^k \rho_i(x') \prod_{j=i+1}^k \alpha_j$
and thus
\begingroup
\allowdisplaybreaks
\begin{align*}
\rho(x)
&=    \sum_{i=1}^k \rho_i(x) \prod_{j=i+1}^k \alpha_j \\
&\geq \left( \sum_{i=1}^{n-1} \rho_i(x) \prod_{j=i+1}^k \alpha_j \right)
      + \left( \rho_n(x) \prod_{j=n+1}^k \alpha_j \right) \\
&=    \left( \sum_{i=1}^{n-1} \rho_i(x) \prod_{j=i+1}^k \alpha_j \right)
      + \left( (\rho_n(x) - 1) \prod_{j=n+1}^k \alpha_j \right)
      + \prod_{j=n+1}^k \alpha_j \\
&>    \left( \sum_{i=1}^{n-1} \rho_i(x) \prod_{j=i+1}^k \alpha_j \right)
      + \left( (\rho_n(x) - 1) \prod_{j=n+1}^k \alpha_j \right)
      + \left( \sum_{i=n+1}^k \rho_i(x') \prod_{j=i+1}^k \alpha_j \right) \\
&\geq \left( \sum_{i=1}^{n-1} \rho_i(x') \prod_{j=i+1}^k \alpha_j \right)
      + \left( \rho_n(x') \prod_{j=n+1}^k \alpha_j \right)
      + \left( \sum_{i=n+1}^k \rho_i(x') \prod_{j=i+1}^k \alpha_j \right) \\
&=    \rho(x').
\end{align*}
\endgroup
For $(x, x') \in \nu(\T^<)$, we proceed analogously except that the case
$(x, x') \in \nu(\T^\leq_i)$ for all $i$ cannot occur.
For $(x, x') \in \nu(\T^{>0})$,
we get $\rho_i(x) > 0$ by the induction hypothesis,
thus $\rho(x) > 0$.

\item \emph{$k$-parallel}: We define the ranking function
\[
\rho(x) := \sum_{i=1}^k \rho_i(x).
\]
For $(x, x') \in \nu(\T^\leq)$,
we have that $\rho_i(x') \leq \rho_i(x)$ for all $i$ by the induction hypothesis,
and hence $\rho(x') \leq \rho(x)$.
For $(x, x') \in \nu(\T^<)$, we again have
by the induction hypothesis that
 $\rho_i(x') \leq \rho_i(x)$ for all $i$,
and that there is an $i$ such that $\rho_i(x') < \rho_i(x)$.
Therefore we obtain $\rho(x') < \rho(x)$.
For $(x, x') \in \nu(\T^{>0})$, we have that
there is an $i$ such that $\rho_i(x) > 0$ by the induction hypothesis,
therefore $\rho(x) > 0$.
\end{itemize}
This completes the induction.
To finish the proof, we note that according to claim (ii) and (iii),
$\rho$ is a ranking function for $\nu(\T^< \land \T^{>0})$,
and by \autoref{lem:rf} this implies that
$\nu(\T^< \land \T^{>0})$ is well-founded.
\end{proof}

Although the procedure introduced in this section allows for
an infinite number of different ranking templates,
it is not exhaustive.
We expect that there are many more types of ranking functions
that can be formalized using linear ranking templates,
and possibly also composed with other templates.

%%%%%%%%%%%%%%%%%%%%%%%%%%%%%%%%%%%%
% Synthesizing Ranking Functions

\section{Synthesizing Ranking Functions}
\label{sec:synthesizing-ranking-functions}

Following related approaches~\cite{ADFG10,BMS05linrank,BMS05polyrank,CSS03,HHLP13,PR04,Rybalchenko10,SSM04},
we transform the $\exists\forall$-con\-straint \eqref{eq:rt} into an $\exists$-constraint.
This transformation makes the constraint more easily solvable
not only because we remove universal quantification,
but also because it reduces the number of nonlinear operations in the constraint.
Every application of an affine-linear function symbol $f$
corresponds to a nonlinear term $\tr{s_f} x + t_f$ where
$s_f$ is a vector of real-valued parameters and
$t_f$ is a real-valued parameter.
For this step, we need the following theorem.

%%%%%%%%%%%%%%%%%%%%%%%%%%%%%%%%%%%%
\subsection{Motzkin's Transposition Theorem}
\label{ssec:motzkin}

Intuitively, Motzkin's transposition theorem states that
a given system of linear inequalities has no solution if and only if
a contradiction can be derived
via a positive linear combination of the inequalities.

\begin{theorem}[{Motzkin's Transposition Theorem~\cite[Cor.\ 7.1k]{Schrijver99}}]
\label{thm:Motzkin}
For $A \in \mathbb{K}^{m \times n}$, $C \in \mathbb{K}^{\ell \times n}$,
$b \in \mathbb{K}^m$, and $d \in \mathbb{K}^\ell$,
the formulas \eqref{eq:motzkin1} and \eqref{eq:motzkin2} are equivalent.
\begin{align}
&\hspace{14.7mm} \forall x \in \mathbb{K}^n.\;
  \neg (Ax \leq b \;\land\; Cx < d)
\tag{M1}\label{eq:motzkin1} \\[1.5mm]
&\begin{aligned}
\exists \lambda \in \mathbb{K}^m \;\exists \mu \in \mathbb{K}^\ell.\;
  &\lambda \geq 0 \;\land\; \mu \geq 0 \\
  \land\; &\tr{\lambda} A + \tr{\mu} C = 0
  \;\land\;  \tr{\lambda} b + \tr{\mu} d \leq 0 \\
  \land\; &( \tr{\lambda} b < 0 \;\lor\; \mu \neq 0 )
\end{aligned}\tag{M2}\label{eq:motzkin2}
\end{align}
\end{theorem}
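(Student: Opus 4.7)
My plan is to prove the two directions separately, treating $\eqref{eq:motzkin2} \Rightarrow \eqref{eq:motzkin1}$ as essentially arithmetic and reducing the harder direction $\eqref{eq:motzkin1} \Rightarrow \eqref{eq:motzkin2}$ to the classical non-strict Farkas lemma via an auxiliary slack variable. Since the theorem is cited from Schrijver, the goal is a lean sketch rather than a full rewrite.

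For $\eqref{eq:motzkin2} \Rightarrow \eqref{eq:motzkin1}$ I would argue by contrapositive. Suppose some $x$ satisfies $Ax \leq b$ and $Cx < d$ and some $(\lambda, \mu)$ satisfies \eqref{eq:motzkin2}. Multiplying $Ax - b \leq 0$ on the left by $\tr{\lambda} \geq 0$ and $Cx - d < 0$ on the left by $\tr{\mu} \geq 0$ and summing yields
\[
(\tr{\lambda} A + \tr{\mu} C)x \;\leq\; \tr{\lambda} b + \tr{\mu} d,
\]
strict whenever $\mu \neq 0$. Substituting $\tr{\lambda} A + \tr{\mu} C = 0$ from \eqref{eq:motzkin2} gives $0 \leq \tr{\lambda} b + \tr{\mu} d$, and combined with the assumption $\tr{\lambda} b + \tr{\mu} d \leq 0$ forces equality. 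The final disjunct of \eqref{eq:motzkin2} now contradicts this: either $\mu \neq 0$, in which case the line above is already strict, or $\mu = 0$ and $\tr{\lambda} b < 0$, which turns the forced equality into $0 = \tr{\lambda} b < 0$.

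For the harder direction $\eqref{eq:motzkin1} \Rightarrow \eqref{eq:motzkin2}$ the plan is to collapse the strict part of the system into a single non-strict slack, so that classical Farkas applies. I would introduce a scalar variable $t$ and consider the linear program $\max \{\, t \mid Ax \leq b,\; Cx + t\mathbf{1} \leq d \,\}$, where $\mathbf{1}$ is the all-ones vector. Condition \eqref{eq:motzkin1} is equivalent to saying this LP is either infeasible or has optimum $\leq 0$, and each case independently produces one branch of the disjunction $\tr{\lambda} b < 0 \lor \mu \neq 0$. If the LP in $(x,t)$ is infeasible, Farkas' lemma supplies $\lambda, \mu \geq 0$ with $\tr{\lambda} A + \tr{\mu} C = 0$, $\tr{\mu} \mathbf{1} = 0$, and $\tr{\lambda} b + \tr{\mu} d < 0$; here $\mu \geq 0$ together with $\tr{\mu} \mathbf{1} = 0$ force $\mu = 0$ and hence $\tr{\lambda} b < 0$. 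If instead the LP is feasible with optimum $\leq 0$, I invoke LP duality: the dual is $\min \tr{\lambda} b + \tr{\mu} d$ subject to $\lambda, \mu \geq 0$, $\tr{\lambda} A + \tr{\mu} C = 0$, $\tr{\mu} \mathbf{1} = 1$, and strong duality yields a feasible $(\lambda, \mu)$ with dual objective $\leq 0$; the normalization $\tr{\mu} \mathbf{1} = 1$ certifies $\mu \neq 0$.

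The main obstacle is exactly the mixed strict/non-strict shape of \eqref{eq:motzkin1}, which bare Farkas cannot digest. Absorbing all strict constraints into the single nonnegative slack $t$ is the key move, and the dichotomy \emph{LP infeasible} versus \emph{LP bounded with optimum $\leq 0$} is precisely what materializes the disjunction in \eqref{eq:motzkin2}. After this reduction I do not anticipate any further conceptual difficulty; what remains is bookkeeping of the dual pairing and checking that the two cases exhaust every infeasibility pattern of $Ax \leq b \land Cx < d$.
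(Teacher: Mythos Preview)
The paper does not prove Motzkin's Transposition Theorem at all; it is merely quoted with a citation to Schrijver and then used. So there is no paper proof to compare your proposal against.

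That said, your sketch is correct and follows a standard route. The easy direction is exactly the positive-combination contradiction you give. For the hard direction, the slack-variable trick of replacing $Cx < d$ by $Cx + t\mathbf{1} \leq d$ and maximizing $t$ is one of the usual textbook reductions to Farkas plus LP duality; your dual computation and the split into ``infeasible'' versus ``bounded with optimum $\leq 0$'' are right. Two small bookkeeping points worth making explicit when you write it up: first, under \eqref{eq:motzkin1} the primal cannot be unbounded (a feasible $t > 0$ would immediately yield a witness against \eqref{eq:motzkin1}), so strong duality is indeed available in the second case; second, the degenerate case $\ell = 0$ should be handled separately (then \eqref{eq:motzkin1} is just infeasibility of $Ax \leq b$ and plain Farkas already gives \eqref{eq:motzkin2} with $\mu$ vacuous), since your LP in $t$ is otherwise trivially unbounded when there are no $C$-rows. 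Neither of these is a conceptual gap.
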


\noindent If $\ell$ is set to $1$ in \autoref{thm:Motzkin}, we obtain
the affine version of Farkas' lemma~\cite[Cor.\ 7.1h]{Schrijver99}.
Therefore \hyperref[thm:Motzkin]{Motzkin's theorem} is strictly
superior to Farkas' lemma, as it allows for a combination of both
strict and non-strict inequalities.  Moreover, it is logically optimal
in the sense that it enables the transformation of \emph{any} purely
universally quantified ($\Pi_1^0$) formula from the theory of linear
arithmetic.

%%%%%%%%%%%%%%%%%%%%%%%%%%%%%%%%%%%%
% Contraint Transformation

\subsection{Constraint Transformation}
\label{ssec:constraint-transformation}

We fix a linear loop program $\loopt$ and a linear ranking template $\T$
with parameters $D$ and affine-linear function symbols $F$.
For simplicity of presentation,
we assume the loop program $\loopt$ does not contain any strict inequalities,
and the ranking template $\T$ does not contain any non-strict inequalities;
however, recall that we are using \hyperref[thm:Motzkin]{Motzkin's theorem}
instead of Farkas' lemma precisely to lift this restriction.
For the fully general constraints, see \cite[Ch.\ 5]{Leike13}.
We write $\loopt$ in disjunctive normal form and $\T$ in conjunctive normal form:
\begin{align*}
        \loopt(x, x')
&\equiv \bigvee_{i \in I} A_i \abovebelow{x}{x'} \leq b_i \\
        \T(x, x')
&\equiv \bigwedge_{j \in J} \bigvee_{\ell \in L_j} \T_{j,\ell}(x, x')
 \equiv \bigwedge_{j \in J} \bigvee_{\ell \in L_j}
          \tr{t_{j,\ell}} \abovebelow{x}{x'} > e_{j,\ell}
\end{align*}
We prove the termination of $\loopt$ by solving the constraint \eqref{eq:rt}.
This constraint is implicitly existentially quantified over the parameters $D$
and the parameters corresponding to the affine-linear function symbols $F$.
\begin{equation}\label{eq:constraint}
\forall x, x'.\;
\left(
\Big( \bigvee_{i \in I} A_i \abovebelow{x}{x'} \leq b_i \Big)
\rightarrow
\Big( \bigwedge_{j \in J} \bigvee_{\ell \in L_j} \T_{j,\ell}(x, x') \Big)
\right)
\end{equation}
First, we transform the constraint \eqref{eq:constraint} into an equivalent constraint of the form required by \hyperref[thm:Motzkin]{Motzkin's theorem}.
\begin{equation}\label{eq:constraint2}
\bigwedge_{i \in I} \bigwedge_{j \in J}
\forall x, x'.\;
\neg \left(
A_i \abovebelow{x}{x'} \leq b_i
\;\land\;
\Big( \bigwedge_{\ell \in L_j} \neg\T_{j,\ell}(x, x') \Big)
\right)
\end{equation}
Now, \hyperref[thm:Motzkin]{Motzkin's Transposition theorem} transforms
the constraint \eqref{eq:constraint2} into
an equivalent existentially quantified constraint:
\begin{equation}\label{eq:transformed-constraint}
\bigwedge_{i \in I} \bigwedge_{j \in J}
\exists \lambda \geq 0\; \exists \zeta \geq 0.\;
\tr{\lambda} A_i + \sum_{\ell \in L_j} \zeta_\ell \tr{t_{j,\ell}} = 0
\;\land\;
\tr{\lambda} b_i + \sum_{\ell \in L_j} \zeta_\ell e_{j,\ell} < 0
\end{equation}
For every inequality in \eqref{eq:motzkin1},
a new existentially quantified variable
is added in \eqref{eq:motzkin2}.
These new existentially quantified variables are called
\emph{Motzkin coefficients}.

The $\exists$-constraint \eqref{eq:transformed-constraint}
is then checked for satisfiability.
If an assignment is found, it gives rise to a ranking function.
Conversely, if no assignment exists,
then there cannot be an instantiation of the linear ranking template and
thus no ranking function of the kind formalized by the linear ranking template exists.
In this sense our method is sound and complete.

\begin{theorem}[Soundness]\label{thm:soundness}
If the transformed $\exists$-constraint \eqref{eq:transformed-constraint}
is satisfiable, then the linear loop program terminates.
\qed
\end{theorem}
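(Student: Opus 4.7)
The plan is to run the derivation of \autoref{ssec:constraint-transformation} in reverse and then apply \autoref{lem:termination}. Suppose $\nu$ satisfies the $\exists$-constraint \eqref{eq:transformed-constraint}. Then $\nu$ fixes values for the parameters $D$, for the parameters $s_f, t_f$ underlying each affine-linear function symbol in $F$, and, for every pair $(i,j) \in I \times J$, nonnegative Motzkin coefficients $\lambda^{(i,j)}$ and $\zeta^{(i,j)}$ that witness the linear identity and the strict inequality in \eqref{eq:transformed-constraint}.

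First I would invoke the easy direction \eqref{eq:motzkin2} $\Rightarrow$ \eqref{eq:motzkin1} of \autoref{thm:Motzkin}, conjunct by conjunct. Under the simplifying assumption that $\loopt$ has no strict inequalities and $\T$ no non-strict ones, every inequality inside the negation in \eqref{eq:constraint2} is non-strict; hence the block $C$ in Motzkin's formulation is empty, $\mu$ vanishes, and the conditions of \eqref{eq:motzkin2} reduce to $\tr{\lambda} A + \sum_\ell \zeta_\ell \tr{t_{j,\ell}} = 0$ together with $\tr{\lambda} b + \sum_\ell \zeta_\ell e_{j,\ell} < 0$, which is exactly what \eqref{eq:transformed-constraint} supplies. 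This gives \eqref{eq:constraint2} for the $(D,F)$-part of $\nu$.

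Next I would observe that \eqref{eq:constraint2} and \eqref{eq:constraint} are classically equivalent: distributing the outer disjunction $\bigvee_{i \in I}$ through the implication, dualising the CNF $\bigwedge_j \bigvee_\ell \T_{j,\ell}$ inside the negation, and rewriting $A \to B$ as $\neg(A \land \neg B)$ are all standard propositional manipulations. Hence the $(D,F)$-part of $\nu$ also satisfies \eqref{eq:constraint}, which is \eqref{eq:rt} for the chosen assignment. \autoref{lem:termination} then delivers termination of $\loopt$.

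No step is genuinely difficult; the whole content of the theorem is that the $\exists$-reformulation \eqref{eq:transformed-constraint} was faithful. The only pitfall to watch for is the direction of \autoref{thm:Motzkin} that one invokes: soundness needs \eqref{eq:motzkin2} $\Rightarrow$ \eqref{eq:motzkin1}, whereas completeness — stated separately from this theorem — would require the converse. A second minor point is to remember that the equivalences of the preceding paragraph rely on $\loopt$ being written in DNF and $\T$ in CNF before the Motzkin step, so that the split into a conjunction indexed by $(i,j)$ is legitimate.
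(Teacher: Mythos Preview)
Your proposal is correct and follows exactly the route the paper intends: the paper omits the proof entirely (the \qed{} directly after the statement), relying on the preceding discussion that \eqref{eq:constraint}, \eqref{eq:constraint2}, and \eqref{eq:transformed-constraint} are equivalent via propositional manipulation and \hyperref[thm:Motzkin]{Motzkin's theorem}, after which \autoref{lem:termination} applies. Your write-up simply makes these steps explicit, including the correct observation that soundness uses the \eqref{eq:motzkin2}$\Rightarrow$\eqref{eq:motzkin1} direction.
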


\begin{theorem}[Completeness]\label{thm:completeness}
If the $\exists\forall$-constraint \eqref{eq:rt} is satisfiable, then so is
the transformed $\exists$-constraint \eqref{eq:transformed-constraint}.
\qed
\end{theorem}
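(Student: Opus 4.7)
The plan is to observe that the chain of rewritings from \eqref{eq:rt} through \eqref{eq:constraint} and \eqref{eq:constraint2} to \eqref{eq:transformed-constraint} consists entirely of logical equivalences: the first two steps are propositional tautologies, and the final step uses the ``if and only if'' of \autoref{thm:Motzkin}. Since satisfiability is preserved throughout, completeness follows by tracing a satisfying assignment forward through the chain.

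Concretely, I would first fix a satisfying assignment $\nu$ to the parameters $D$ and the affine-linear function symbols $F$ that makes \eqref{eq:rt} valid. Formula \eqref{eq:constraint} is obtained from \eqref{eq:rt} merely by unfolding the DNF representation of $\loopt$ and the CNF representation of $\nu(\T)$, so it is valid under $\nu$ as well. For the step to \eqref{eq:constraint2}, I would invoke only the propositional equivalences $(P \lor Q) \to R \equiv (P \to R) \land (Q \to R)$, $P \to (R \land S) \equiv (P \to R) \land (P \to S)$, and $P \to Q \equiv \neg(P \land \neg Q)$, distributing the outer implication over the disjunction in $\loopt$ and the conjunction in $\nu(\T)$. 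The result is a conjunction over $(i, j) \in I \times J$ of formulas exactly of the shape \eqref{eq:motzkin1}.

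Then, for each pair $(i, j)$, I would apply the forward direction of \autoref{thm:Motzkin} to extract Motzkin coefficients $\lambda^{(i,j)} \geq 0$ and $\zeta^{(i,j)} \geq 0$ witnessing the corresponding conjunct of \eqref{eq:transformed-constraint}. Concatenating $\nu$ with all of these coefficients produces a satisfying assignment for the whole $\exists$-constraint \eqref{eq:transformed-constraint}, which establishes the theorem. Note that the existential quantification over Motzkin coefficients is \emph{inside} the outer conjunction in \eqref{eq:transformed-constraint}, so we are free to choose the coefficients independently for each $(i,j)$.

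The main point that deserves care is the bookkeeping around strict versus non-strict inequalities in the simplified presentation used in \autoref{ssec:constraint-transformation}: since under this simplification $\loopt$ has only non-strict atoms and $\nu(\T)$ only strict atoms, every conjunct of \eqref{eq:constraint2} becomes a system of purely non-strict inequalities, so the ``$\tr{\lambda} b < 0 \lor \mu \neq 0$'' side condition in \eqref{eq:motzkin2} collapses to the single strict inequality displayed in \eqref{eq:transformed-constraint}. In the fully general setting of \cite{Leike13}, the same argument goes through with the full statement of \autoref{thm:Motzkin}, since all steps remain equivalences.
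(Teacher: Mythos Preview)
Your proposal is correct and follows precisely the approach implicit in the paper: the theorem is stated with an immediate \qed\ because the preceding discussion already explains that \eqref{eq:constraint}, \eqref{eq:constraint2}, and \eqref{eq:transformed-constraint} are obtained by a chain of logical equivalences (propositional rewriting followed by the ``if and only if'' of \autoref{thm:Motzkin}). Your write-up merely spells out these equivalences and the strict/non-strict bookkeeping in more detail than the paper does, but there is no substantive difference.
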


%%%%%%%%%%%%%%%%%%%%%%%%%%%%%%%%%%%%
% Ranking Template Pools

\subsection{Ranking Template Pools}
\label{ssec:rt-pools}

Our method for ranking function synthesis can be applied as follows.
We fix a finite pool of linear ranking templates $\mathcal{T}$,
consisting of
multiphase, nested, piecewise, lexicographic, and parallel ranking templates
as well as composed templates in various sizes.
The input is a linear loop program $\loopt$
that we want to check for termination.
We start by picking a linear ranking template $\T$ from the pool $\mathcal{T}$.
From the ranking template $\T$ we build the constraint \eqref{eq:rt}
to the parameters and affine-linear function symbols of $\T$.
This constraint is transformed using \hyperref[thm:Motzkin]{Motzkin's theorem}
to an $\exists$-constraint \eqref{eq:transformed-constraint}.
If this constraint is satisfiable,
this gives rise to a ranking function according to \autoref{lem:rf},
and thus we proved that the loop program $\loopt$ terminates.
Otherwise, we try again
using the next linear ranking template from the pool $\mathcal{T}$
until the pool has been exhausted.
If the pool has been exhausted,
the proof of the loop program $\loopt$'s termination failed.
However, due to the completeness of our method,
we know that the loop program $\loopt$
does not have a ranking function of the form
specified by any of the linear ranking templates in the pool.
\autoref{fig:synthesis-method} is a description of our method in pseudocode.

\begin{figure}[t]
\begin{description}
\item[Input] linear loop program $\loopt$ and a list of linear ranking templates $\mathcal{T}$
\item[Output] a ranking function for $\loopt$ or \texttt{null} if none is found
\end{description}
\begin{center}
\begin{minipage}{98mm}
\begin{lstlisting}
foreach $\T \in \mathcal{T}$ do:
    let $\varphi$ = $\forall x, x'.\; \big( \loopt(x, x') \rightarrow \T(x, x') \big)$
    let $\psi$ = transformWithMotzkin($\varphi$)
    if SMTsolver.checkSAT($\psi$):
        let ($D$, $F$) = $\T$.getParameters()
        let $\nu$ = getAssignment($\psi$, $D$, $F$)
        return $\T$.extractRankingFunction($\nu$)
return $\texttt{null}$
\end{lstlisting}
\end{minipage}
\end{center}
\caption{
Our ranking function synthesis algorithm described in pseudocode.
The function \texttt{transformWithMotzkin} transforms
the $\exists\forall$-constraint $\varphi$ into an $\exists$-constraint $\psi$
as described in \autoref{ssec:constraint-transformation}.
The ranking function is extracted from an assignment of the template
with the function \texttt{extractRankingFunction}.
This function returns a description of the ranking function depending on the template;
for example the ordinal-based representation from the proofs.
}\label{fig:synthesis-method}
\end{figure}

%%%%%%%%%%%%%%%%%%%%%%%%%%%%%%%%%%%%
% Linear Lasso Programs

\section{Linear Lasso Programs}
\label{sec:linear-lasso-programs}

\begin{wrapfigure}{r}{0.4\textwidth}
\vspace{-7mm}
\begin{center}
\begin{tikzpicture}
\node (1) at (0,0) {};
\node[state] (2) at (2.5,0) {};
\draw[arrows={-triangle 45}] (1) to node[above] {$\stemt$} (2);
\draw[arrows={-triangle 45}, loop right,in=30,out=-30,looseness=10]
  (2) to node {$\loopt$} (2);
\end{tikzpicture}
\end{center}
\caption{A lasso program.
%Its name is motivated by the shape of its transition graph.
}
\label{fig:lasso}
\end{wrapfigure}

Our method extends to
the more general setting of \emph{linear lasso programs}.
These are linear loop programs that have a program stem in addition to the loop
(see \autoref{fig:lasso}).
We use affine-linear inductive invariants
to extract the information that is crucial for the termination proof
from the stem.
This is in line with related approaches~\cite{CSS03,SSM04,BMS05linrank,HHLP13}.

\begin{definition}[Linear Lasso Program]
\label{def:linear-lasso-program}
A \emph{linear lasso program} $\prog = (\stemt, \loopt)$ consists of
\begin{itemize}
\item a linear loop program $\loopt$, and
\item a predicate $\stemt$, defined by
  a formula with the free variables $x$ of the form
\[
\bigvee_{i \in I} \big( A_i x \leq b_i \;\land\; C_i x < d_i \big)
\]
for some finite index set $I$, some matrices $A_i \in \mathbb{K}^{n \times m_i}$,
$C_i \in \mathbb{K}^{n \times k_i}$,
and some vectors $b_i \in \mathbb{K}^{m_i}$ and $d_i \in \mathbb{K}^{k_i}$.
\end{itemize}
The linear lasso program $\prog$ is called \emph{conjunctive}
iff there is only one disjunct in both transitions $\stemt$ and $\loopt$.
\end{definition}

\begin{definition}[Affine-Linear Supporting Invariant]
\label{def:supporting-invariant}
A formula $\psi$ is an \emph{affine-linear supporting invariant}
for the linear lasso program $\prog$ iff
there is an affine-linear function $f$ such that
\[
\psi(x) \equiv f(x) \rhd 0
\]
with $\rhd \in \{ \geq, > \}$,
and the following two formulas %,
%\emph{invariant initiation} \eqref{eq:ii} and \emph{invariant consecution} \eqref{eq:ic},
hold.
\begin{align}
\forall x.&\; \stemt(x) \rightarrow \psi(x)
\label{eq:ii}\tag{II} \\
\forall x, x'.&\; \psi(x) \land \loopt(x, x') \rightarrow \psi(x')
\label{eq:ic}\tag{IC}
\end{align}
The affine-linear supporting invariant $\psi$ is \emph{strict}
iff $\rhd$ is $>$ and \emph{non-strict} otherwise.
\end{definition}

Given a linear lasso program,
we do the same transformation steps as in \autoref{ssec:constraint-transformation},
adding a finite number of supporting invariants:
\[
\forall x, x'.\; \loopt(x, x') \;\land\; \bigwedge_\ell \psi_\ell(x)
  \rightarrow \T(x, x')
\]
In fact, every conjunct in \eqref{eq:constraint2} gets $m$ supporting invariants:
\begin{equation}\label{eq:constraint2-lasso}
\bigwedge_{i \in I} \bigwedge_{j \in J}
\forall x, x'.\;
\neg \left(
A_i \abovebelow{x}{x'} \leq b_i
\;\land\;
\Big( \bigwedge_{\ell=1}^m \psi_{i,j,\ell}(x) \Big)
\;\land\;
\Big( \bigwedge_{\ell \in L_j} \neg\T_{j,\ell}(x, x') \Big)
\right)
\end{equation}
To insure that the $\psi_{i,j,\ell}(x)$ are indeed supporting invariants,
we add the constraints \eqref{eq:ii} and \eqref{eq:ic}
for each $(i, j, \ell) \in I \times J \times \{ 1, \ldots, m \}$.
Each of these constraints is then transformed using \hyperref[thm:Motzkin]{Motzkin's theorem}.
analogously to \autoref{ssec:constraint-transformation}.
Not all invariants are inductive and
we only consider invariants that are affine-linear inequalities.
We do not retain completeness of
our method in the sense of \autoref{thm:completeness}.

The invariant initiating \eqref{eq:ii} is a linear constraint,
but the invariant consecution \eqref{eq:ic} is nonlinear.
We could make \eqref{eq:ic} linear
by restricting ourselves to non-decreasing invariants~\cite{HHLP13}.
However, the overall constraints are generally still nonlinear
because the constraints that come from the linear ranking template
are generally nonlinear.

%%%%%%%%%%%%%%%%%%%%%%%%%%%%%%%%%%%%
% Related Work

\section{Related Work}
\label{sec:related-work}

Synthesis of linear ranking functions for linear loop programs
was first discussed by Colón and Sipma~\cite{CS01}.
This was extended to a complete template-based method
by Podelski and Ry\-bal\-chen\-ko~\cite{PR04, Rybalchenko10},
using the PR ranking template as discussed in \autoref{ex:rt-pr}.
Their method is not complete over the integers.
Cook et al.~\cite{CKRW13} compute the integral hull
of transition relations
in order obtain the same completeness
for integers and bitvectors.
Bagnara and Mesnard generalize the PR ranking template to
the 2-phase ranking template,
relying on nonlinear constraint solving~\cite{BM13}.

Bradley, Manna, and Sipma propose a constraint-based approach for linear lasso
programs~\cite{BMS05linrank}.
Their termination argument is a lexicographic ranking function
with each lexicographic component corresponding to one loop disjunct.
This requires nonlinear constraint solving and
an ordering on the loop disjuncts.
The authors extend this approach in \cite{BMS05polyrank}
by the use of \emph{template trees}.
These trees allow each lexicographic component to have a ranking function that
decreases not necessarily in every step, but \emph{eventually}.

%In \cite{HHLP13} the method of Podelski and Rybalchenko is extended.
%Utilizing supporting invariants analogously to Bradley et al.,
%affine-linear ranking functions are synthesized.
%Due to the restriction to non-decreasing invariants,
%the generated constraints are linear.

% A collection of example-based explanations of
% constraint-based verification techniques can be found in \cite{Rybalchenko10}.
% This includes the generation of ranking functions, interpolants,
% invariants, resource bounds and recurrence sets.

Ben-Amram and Genaim discuss
the synthesis of affine-linear and lexicographic ranking functions
for linear loop programs over the integers~\cite{BAG13}.
They prove that this problem is generally co-NP-complete and show that
several special cases admit a polynomial time complexity.

In \cite{ChenFM12} the authors also address the problem of
finding termination arguments for
(not necessarily conjunctive) linear loop programs. 
In contrast to our work,
the authors do not synthesize the termination argument directly.
Instead, they iteratively synthesize linear ranking functions and
obtain a disjunctively well-founded relation~\cite{PR04TI}
as a termination argument.

Approaches for computing lexicographic linear ranking functions
for a more general class of programs,
namely programs that can consist of several (potentially nested) loops
are presented in \cite{ADFG10} and \cite{CookSZ13}.
On linear loop programs,
both algorithms involve choosing an ordering on the loop disjuncts.
Hence, both approaches are either incomplete or
have to use backtracking
to iteratively consider all possible orderings of loop disjuncts.
% TODO Reviewer: Another incomplete approach for a more general class of ranking functions (namely polynomial ones) is given in "SAT Solving for Termination Analysis with Polynomial Interpretations" by C. Fuhs, J. Giesl, A. Middeldorp, P. Schneider-Kamp, R. Thiemann, and H. Zankl in SAT'07. \cite{FGMSTZ07}

Our method is not able to prove termination
for all terminating linear loop programs.
Termination is decidable for the subclass of
deterministic conjunctive linear loop programs of the form
\begin{center}
\texttt{while($B_s x > b_s \land B_w x \geq b_w$) $x$ := $Ax + c$;}
\end{center}
where the matrices $B_s$, $B_w$, $A$ and
vectors $b_s$, $b_w$, $c$ are rational, and
variables can take on rational or real values~\cite{Tiwari04}.
This class also admits decidable termination analysis over the integers
for the homogeneous case where $b_s, b_w, c = 0$~\cite{Braverman06}.
However, their method is not targeted at the synthesis of ranking functions.
%However, termination becomes undecidable for linear loop programs
%with integer-valued variables
%if the loop's coefficients contain at least one irrational number~\cite{BGM12}.

Ranking functions can also be computed
via abstract interpretation~\cite{CousotC12}.
Urban and Miné~\cite{Urban13,UM14ESOP,UM14SAS} introduced
the domain of piecewise defined ordinal-valued functions for this approach.
In contrast to our work,
their approach is applicable to programs with arbitrary structure and
not restricted to linear lasso programs.
However, the authors do not provide completeness results
that state that a ranking function of a certain form can always be found.

%%%%%%%%%%%%%%%%%%%%%%%%%%%%%%%%%%%%
% Conclusion

\section{Conclusion}
\label{sec:conclusion}

\begin{table}[t]
\begin{center}
\renewcommand{\arraystretch}{1.2} % set row spacing
\begin{tabular}{lcccccc}
	& PR & $k$-phase & $k$-nested & $k$-piece & $k$-lexicographic & $k$-parallel \\
\hline
Parameters
	& $1$ & $k$      & $1$      & $1$             & $k$            & $k$ \\
% affine-linear
Function symbols
	& $1$ & $k$      & $k$      & $2k$            & $k$            & $k$ \\
Conjuncts
	& $3$ & $2k + 1$ & $k + 2$  & $k^2 + k + 2$   & $3k$           & $2^k + 2k$ \\
Atoms
	& $3$ & $4k - 1$ & $k + 2$  & $3k^2 + 2k + 1$ & $(5k^2 + k)/2$ & $k2^k + 2k$
\end{tabular}
\end{center}
\caption{
Statistics of our linear ranking templates in CNF;
the integer $k$ specifies their size.
Every affine-linear function symbol
constributes $n + 1$ parameters to the template,
where $n$ is the number of program variables.
}
\label{tab:rt}
\end{table}

We presented a sound and complete method for constraint-based synthesis of ranking functions for linear loop programs.
For this method, we introduced the notion of \emph{linear ranking templates},
which are parameterized formulas for well-founded relations.
In \autoref{sec:templates} we established how they can be applied to prove termination (\autoref{lem:termination}) and
that an instantiation of a linear ranking template gives rise to a ranking function (\autoref{lem:rf}).
Our method can be applied to different kinds of ranking functions that previously have been considered independently (affine-linear and lexicographic ranking functions),
in addition to enabling new kinds
(multiphase, piecewise, and parallel ranking functions).
The ranking templates can also be composed into more powerful templates,
allowing for more general ranking functions.
See \autoref{tab:rt} for statistics on the size of our ranking templates.

Our method can be applied to linear loop programs and linear lasso programs
with variables that are rational numbers, real numbers, or integers.
In general, it requires solving nonlinear algebraic constraints,
but some linear ranking templates such as
the PR ranking template or the nested ranking template
only require linear constraint solving.

\subsection*{Acknowledgements}
We wish to thank 
Samir Genaim for pointing our an error
in the conference version of \autoref{thm:conjunctive-loop-no-multiphase},
and Amir M.\ Ben-Amram for detailed comments on the Master's thesis~\cite{Leike13}
from which this paper was derived.
Moreover, we thank Andreas Podelski
for his detailed feedback and helpful suggestions.

%%%%%%%%%%%%%%%%%%%%%%%%%%%%%%%%%%%%
% Bibliography

\bibliographystyle{alpha}
\bibliography{references}

\end{document}